\newtheorem{theorem}{Theorem}
\newtheorem{lemma}[theorem]{Lemma}
\theoremstyle{definition}
\newcommand{\calG}{\mathcal{G}}
\newcommand{\calC}{\mathcal{C}}
\newcommand{\nn}{\mathbf{n}}
\newcommand{\vv}{\mathbf{v}}
\newcommand{\eq}{\text{\normalfont EQ}}
\newcommand{\SW}{\text{\normalfont SW}}
\newcommand{\PoA}{\text{\normalfont PoA}}
\newcommand{\PoS}{\text{\normalfont PoS}}
\newcommand{\opt}{\text{\normalfont OPT}}
\begin{document}

\title{\bf Modified Schelling Games}

\author{Panagiotis Kanellopoulos \and Maria Kyropoulou \and Alexandros A. Voudouris}

\date{School of Computer Science and Electronic Engineering, University of Essex, UK}

\maketitle

\begin{abstract}
We introduce the class of \emph{modified Schelling} games in which there are different types of agents who occupy the nodes of a location graph; agents of the same type are friends, and agents of different types are enemies. Every agent is strategic and jumps to empty nodes of the graph aiming to maximize her utility, defined as the ratio of her friends in her neighborhood over the neighborhood size {\em including} herself. This is in contrast to the related literature on Schelling games which typically assumes that an agent is excluded from her neighborhood whilst computing its size. Our model enables the utility function to capture likely cases where agents would rather be around a lot of friends instead of just a few, an aspect that was partially ignored in previous work. We provide a thorough analysis of the (in)efficiency of equilibria that arise in such modified Schelling games, by bounding the price of anarchy and price of stability for both general graphs and interesting special cases. Most of our results are tight and exploit the structure of equilibria as well as sophisticated constructions.
\end{abstract}

\section{Introduction}\label{sec:intro}
More than 50 years ago, Thomas Schelling \citeyearpar{schelling1969,schelling1971} presented the following simple probabilistic procedure in an attempt to model residential segregation. There are two types of agents who are uniformly at random placed at the nodes of a location graph (such as a line or a grid), and a tolerance threshold parameter $\tau \in (0,1)$. If the neighborhood of an agent consists of at least a fraction $\tau$ of agents of her own type, then the agent is happy and remains at her current location. Otherwise, the agent is unhappy and either jumps to a randomly selected empty node of the graph or swaps locations with another randomly chosen unhappy agent. Schelling experimentally showed that this random process can lead to placements such that the graph is partitioned into two parts, each containing mostly agents of the same type, even when the agents are tolerant towards having neighbors of the other type (that is, when $\tau < 1/2$).

Since its inception, Schelling's model and interesting variants of it have been studied extensively both experimentally and theoretically from the perspective of a plethora of different disciplines, including Sociology \citep{Clark2008understanding}, Economics \citep{Pancs2007spatial,Zhang2004residential}, Physics \citep{Vinkovic2006schelling}, and Computer Science \citep{Barmpalias2014digital,Bhakta2014clustering,Brandt2012one,Immorlica2017exponential}. Most of these works have focused on the analysis of random processes similar to the one proposed by Schelling, either via agent-based simulations or via Markov chains, and have shown that segregation occurs with high probability.

A more recent stream of papers \citep{swap,bilo2020topological,chan2020schelling,chauhan2018schelling,echzell2019dynamics,jump} have considered {\em Schelling games}, that is, game-theoretic variants of Schelling's model with multiple types of agents and general location graphs. The agents behave strategically and aim to maximize a utility function, which is defined as the minimum between the threshold parameter $\tau$ and the ratio of the other agents of the same type within one's neighborhood over the (occupied) neighborhood size.
These papers have considered both {\em jump} games, in which the agents are allowed to jump to empty nodes of the location graph, and {\em swap} games, in which the agents are only allowed to pairwise swap positions. Among other questions, they have studied the complexity of computing equilibrium assignments (i.e., placements such that no agent wants to jump to an empty node or no pair of agents wants to swap positions), the complexity of maximizing social welfare (i.e., the total utility of the agents), and have shown bounds on the price of anarchy \citep{poa} and the price of stability \citep{pos}.

One limitation of the utility function defined above and used in the related literature on Schelling games, which our model aims to address, is that it does not allow the agents to distinguish between neighborhoods that consist only of agents of their own type, but may vary in size. To give a concrete example, consider a red agent who faces the dilemma of choosing between two empty nodes, one of which is adjacent to one red agent, while the other is adjacent to two red agents. Since the utility is defined as the fraction of red neighbors, both empty nodes offer the same utility of $1$ to our agent, which means that she can choose arbitrarily amongst them. However, it is arguably more realistic to assume that the second empty node is more attractive than the first one as it is adjacent to a strictly larger number of red agents, and consequently the agent would normally choose it. To strengthen the ability of the utility function to express preferences of this kind, we redefine it by assuming that the agent considers herself as part of the set of her neighbors, which simply translates to a ``+1'' term added to the denominator of the ratio; this is similar to fractional hedonic games (see the discussion below). Back to our example, the new {\em modified} utility function would yield utilities of $1/2$ and $2/3$ for the two empty nodes, respectively, reflecting the agent's preference for the second node.

\begin{table}[t]
\centering
\setlength{\tabcolsep}{4.5pt}
\begin{tabular}{l |cccc c |c }
\noalign{\hrule height 1pt}\hline
			& \multicolumn{4}{c}{PoA}  & \ \ & PoS 	\\
			& arbitrary & balanced & line & tree     	&  \\
\noalign{\hrule height 1pt}\hline
& & & & & & \\[-0.15cm]
$k=1$ 	& $2-\frac{2}{n}$  & N/A  & $\frac{4}{3}-\frac{2}{3n}$ & $\frac{4}{3}-\frac{2}{3n}$ &  \ \ 	& $\in [\frac{15}{14}, \frac{3}{2}]$ \\[0.35cm]
\multirow{2}{*}{$k \geq 2$} & \multirow{2}{*}{$\frac{2n(n-k)}{n+2}$} & \multirow{2}{*}{$2k$} 	& $2$ ($k=2$)  &  $\frac{14}{9}k$ ($k\in \{2,3\})$  & \ \ &  \multirow{2}{*}{$\geq \frac{4}{3}$ ($k=2$)}\\
   & 	&  & $k+1/2$ ($k\geq 3$)  &  $\frac{2k^2}{k+1}$ ($k \geq 4$) & \ \ & \\[0.2cm]
\noalign{\hrule height 1pt}\hline
\end{tabular}
\caption{Overview of our price of anarchy and price of stability bounds. For $k=1$, the case of balanced games is obviously non-applicable (N/A). For $k\geq 2$, all price of anarchy bounds are for games with at least two agents per type (otherwise, the PoA can be easily seen to be unbounded), while the PoA bounds for lines and trees are restricted to balanced games. Unless specified otherwise (like for PoS), the bounds presented are tight.}
\label{tab:results}
\end{table}

\subsection*{Our setting and contribution}
We introduce the class of modified Schelling games. In such games, there are $k$ types of agents who occupy the nodes of some location graph and aim to maximize their utility, which is defined by the modified function discussed above,  by jumping to empty nodes whenever such a move is beneficial. Since the modified utility function is able to express preferences over monochromatic neighborhoods of different sizes, a strategic game is induced even when there is a single type of agents.
For $k=1$, we argue that the best-response dynamics always converges to an equilibrium assignment in polynomial time, while this is not generally true for $k\geq 2$. Our main technical contribution is a thorough price of anarchy and price of stability analysis. We distinguish between games on arbitrary location graphs, balanced games in which there is the same number of agents per type (for $k \geq 2$), as well as games with structured location graphs such as lines and trees. We show {\em tight} bounds on the price of anarchy, by carefully exploiting the structure of equilibrium assignments and the properties of the games we study. We also show lower bounds on the price of stability for $k \in \{1,2\}$, as well as an upper bound for $k=1$; to the best of our knowledge, this is the first non-trivial upper bound on the price of stability for general location graphs in the related literature. An overview of our results is given in Table~\ref{tab:results}.

\subsection*{Related work}
We will mainly discuss the related literature on Schelling games.
\citet{chauhan2018schelling} studied the convergence of the best-response dynamics to an equilibrium assignment in both jump and swap Schelling games with two types of agents and for various values of the threshold parameter $\tau$. They presented a series of positive and negative results depending on the relation of $\tau$ to other parameters related to the location graph. Their results were later extended by \citet{echzell2019dynamics} for more than two types of agents and for two different generalizations of the utility function: one that considers all types in the denominator of the ratio, and one that considers only the type of the agent at hand and the type of maximum cardinality among the remaining types.

\citet{jump} considered a variant of jump Schelling games with $k \geq 2$ types of agents who may behave in two different ways: some of them are strategic and aim to maximize their utility, while some others are stubborn and stay at their initial location regardless of the composition of the neighborhood. Elkind {\em et al.} showed that equilibrium assignments may fail to exist, they proved that the problem of computing an equilibrium or an assignment with high social welfare is intractable, and also showed bounds on the price of anarchy and the price of stability. Furthermore, they discussed several extensions, among which that of {\em social Schelling games}, where the friendships among agents are specified by a social network. This class of games was further studied by \citet{chan2020schelling}, who also assumed that the nodes of the location graph can be shared by different agents.

\citet{swap} considered swap Schelling games. Besides studying complexity and price of anarchy questions similar to those of Elkind {\em et al.}, they also considered related questions for a different objective function over assignments, called {\em degree of integration}, which aims to capture how diverse an assignment is; this function counts the number of agents who have at least one neighbor of different type. Very recently, \citet{bilo2020topological} performed a refined price of anarchy analysis with respect to the social welfare in the model of Agarwal {\em et al.} for swap games: they showed improved bounds for $k=2$, as well as for games with structured location graphs such as cycles, trees, regular graphs, and grids. Furthermore, they initiated the study of games with the {\em finite improvement property},  in which the agents can swap positions only with agents within a restricted radius from their current location.
In a slightly different context, \citet{massand2019graphical} studied games that are similar to swap social Schelling games, but with linear utility functions, instead of fractions.

As pointed out by Elkind {\em et al.}, Schelling games are very similar to hedonic games~\citep{hedonic,dreze1980hedonic}, but also quite distinct from them: while one can think of the neighborhoods as coalitions, these coalitions generally overlap depending on the structure of the location graph. Somewhat counter-intuitively, the games studied by almost all the aforementioned papers are analogous to modified fractional hedonic games~\citep{elkind2016pareto,monaco2020modified,olsen2012modified}, where the agents are connected via a weighted graph and are partitioned into coalitions; each agent derives a utility which is the total weight of her connections within her coalition divided by the size of the coalition excluding herself. In contrast, the modified Schelling games we study in this paper are analogous to fractional hedonic games~\citep{Aziz2019fractional,Bilo2018fractional}, where the utility of an agent is defined as the total weight of her connections within her coalition divided by the size of the coalition {\em including} herself.

\section{Preliminaries}\label{sec:prelim}
There are $n \geq 2$ {\em agents} who are partitioned into $k \geq 1$ {\em types}. We denote by $T_\ell$ the set of all agents of type $\ell \in [k]$, and let $n_\ell=|T_\ell|$ such that $n = \sum_{\ell \in [k]}n_\ell$; also, let $\nn=(n_\ell)_{\ell \in [k]}$.  Agents of the same type are {\em friends}, and agents of different types are {\em enemies}. The agents occupy the nodes of a simple {\em undirected connected} location graph $G=(V,E)$ with $|V| > n$ nodes; following previous work, we refer to this graph as the {\em topology}.
An {\em assignment} $\vv = (v_i)_{i \in [n]}$ is a vector containing the node $v_i \in V$ occupied by each agent $i \in [n]$ such that $v_i \neq v_j$ for $i \neq j$.

For an assignment $\vv$, we denote by $N(v|\vv)$ the set of agents that are adjacent to node $v \in V$. Moreover, let $x(v|\vv) = |N(v|\vv)|$ and denote by $x_\ell(v|\vv) = |N(v|\vv) \cap T_\ell|$ the number of agents of type $\ell \in [k]$ in the neighborhood of node $v$. Then, the utility of an agent $i$ of type $\ell$ who occupies node $v_i$ under assignment $\vv$ is defined as
\begin{align*}
u_i(\vv) = \frac{x_\ell(v_i|\vv)}{1+x(v_i|\vv)}.
\end{align*}
To simplify our notation, we will omit $\vv$ whenever it is clear from context, and will sometimes use colors to refer to different types.

The agents are strategic and can {\em jump} to empty nodes of the topology to maximize their utility. An assignment $\vv$ is called a {\em pure Nash equilibrium} (or, simply, {\em equilibrium}) if no agent prefers to jump to any empty node, that is, $u_i(\vv) \geq u_i(v,\vv_{-i})$ for every agent $i$ and empty node $v$, where $(v,\vv_{-i})$ is the assignment according to which agent $i$ occupies $v$ and all other agents occupy the same nodes as in $\vv$. Let $\eq(\calG)$ denote the set of all equilibrium assignments of a {\em modified $k$-Schelling game} $\calG=(\nn,G)$.

The {\em social welfare} of an assignment $\vv$ is the total utility of the agents:
\begin{align*}
\SW(\vv) = \sum_{i \in [n]} u_i(\vv).
\end{align*}
For a given game, the maximum social welfare among all possible assignments is denoted by $\opt=\max_{\vv}\SW(\vv)$.
The {\em price of anarchy} of a modified $k$-Schelling game $\calG$ with $\eq(\calG) \neq \varnothing$ is the ratio of the maximum social welfare achieved by any possible assignment over the minimum social welfare achieved at equilibrium, that is,
\begin{align*}
\PoA(\calG) = \frac{\opt}{\min_{\vv \in \eq(\calG)}\SW(\vv)}.
\end{align*}
Then, the price of anarchy of a class $\calC$ of modified $k$-Schelling games  is
$$\PoA(\calC) = \sup_{\calG \in \calC: \eq(\calG) \neq \varnothing} \PoA(\calG).$$
Similarly, the {\em price of stability} of a modified $k$-Schelling game $\calG$ with $\eq(\calG) \neq \varnothing$ is the ratio of the maximum social welfare achieved by any possible assignment over the maximum social welfare achieved at equilibrium, that is,
\begin{align*}
\PoS(\calG) = \frac{\opt}{\max_{\vv \in \eq(\calG)}\SW(\vv)},
\end{align*}
and the price of stability of a class $\calC$ of modified $k$-Schelling games is
$$\PoS(\calC) = \sup_{\calG \in \calC: \eq(\calG) \neq \varnothing} \PoS(\calG).$$
Besides general modified $k$-Schelling games, we will also be interested in {\em balanced} games in which for there are $n/k$ agents of each type $\ell \in [k]$, as well as games in which the topology has a particular set of properties (for instance, it is a line or a tree).

\section{One-type Games}\label{sec:one}
Interestingly, the modified Schelling model that we consider in this paper admits a game even when all agents are of the same type. This is in sharp contrast to the original model in which the utility of any agent who only has neighbors of the same type is always $1$, implying that any assignment is an equilibrium when there is only one type of agents; see Section \ref{sec:intro} for a more detailed discussion on the differences between the two utility models. In this section, we focus entirely on the case where there is one type of agents and study the equilibrium properties of the induced strategic games. We start by showing that there always exist equilibrium assignments in such games.

\begin{theorem}\label{thm:1-existence}
Modified $1$-Schelling games always admit at least one equilibrium assignment, which can be computed in polynomial time.
\end{theorem}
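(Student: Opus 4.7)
The plan is to exhibit an exact integer potential function for any one-type game, from which both the existence of an equilibrium and polynomial-time convergence of best-response dynamics will follow immediately. The first observation to record is that, in a modified $1$-Schelling game, the utility of an agent $i$ occupying node $v_i$ simplifies to $u_i(\vv) = x(v_i|\vv)/(1+x(v_i|\vv))$, which is a strictly increasing function of $x(v_i|\vv)$. Consequently, agent $i$ strictly benefits from jumping from $v_i$ to an empty node $v$ if and only if $v$ has strictly more occupied neighbors than $v_i$ does (under $\vv_{-i}$); in particular, the ``beneficial jump'' relation is an integer-valued strict inequality.

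The natural candidate potential is then $\Phi(\vv)$ defined as the number of edges of the topology whose both endpoints are occupied under $\vv$, i.e., the edge count of the subgraph induced by occupied nodes. When agent $i$ moves from $v_i$ to an empty $v$, the set of such edges loses exactly the $x(v_i|\vv_{-i})$ edges incident to $v_i$ and gains exactly the $x(v|\vv_{-i})$ edges incident to $v$ (no edge is double-counted because $v$ was empty before the jump). Hence $\Phi$ changes by precisely $x(v|\vv_{-i}) - x(v_i|\vv_{-i})$, and by the monotonicity observation above, any profitable deviation strictly increases $\Phi$ by at least one.

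Since $\Phi(\vv) \le |E(G)|$ throughout, best-response dynamics starting from any (arbitrary) initial assignment must terminate after at most $|E(G)|$ improvement steps at an equilibrium. Each step is implementable in polynomial time by scanning all agent/empty-node pairs and selecting any beneficial jump, so the overall running time is polynomial in the size of $G$. This simultaneously yields the existence claim and the algorithmic claim of the theorem.

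I do not foresee a significant obstacle: the entire argument reduces to identifying the correct potential, after which everything is driven by the strict monotonicity of $x/(1+x)$. The only point requiring care is to distinguish $x(\cdot|\vv)$ from $x(\cdot|\vv_{-i})$ when computing $\Delta \Phi$ and when comparing utilities, so as to confirm that the relevant ``occupied-neighbor counts'' appearing in the utility change and in the potential change are the same quantities; this bookkeeping is routine.
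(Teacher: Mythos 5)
Your proposal is correct and takes essentially the same route as the paper: the paper's potential $\Phi(\vv)=\sum_i x_{v_i}(\vv)$ is exactly twice your count of induced occupied--occupied edges, and both arguments rest on the strict monotonicity of $x/(1+x)$ to conclude that every profitable jump increases the potential by at least one, giving termination in polynomially many steps.
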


\begin{proof}
Consider any modified $1$-Schelling game. For any assignment $\vv$ and node $v$, let $N_v(\vv) = N(v | \vv)$ and $x_v(\vv) = |N_v(\vv)|$.
We define the function
$$\Phi(\vv)=\sum_{v} x_v(\vv).$$
We will argue that $\Phi$ is an ordinal potential function for our setting: if the utility of an agent increases (decreases, respectively) after she jumps to an empty node, then we will observe an increase (decrease, respectively) in the potential of the corresponding assignments.

Consider two assignments $\vv = (v, \vv_{-i})$ and $\vv' = (v', \vv_{-i})$ which differ on the node that an agent $i$ occupies.
We observe the following:
\begin{itemize}

\item
For every node $z$ such that $i \not\in A = N_z(\vv) \cup N_z(\vv')$ (that is, $i$ is not adjacent to $z$ in any assignment) or $i \in B = N_z(\vv) \cap N_z(\vv')$ (that is, $i$ is adjacent to $z$ in both assignments), $x_z(\vv) = x_z(\vv')$.

\medskip

\item
For every node $z$ such that $i \in \Gamma = N_z(\vv) \setminus N_z(\vv')$ (that is, $i$ is adjacent to $z$ in $\vv$ but not in $\vv'$),
$x_z(\vv) = x_z(\vv') + 1$.

\medskip

\item
For every node $z$ such that $i \in \Delta = N_z(\vv') \setminus N_z(\vv)$ (that is, $i$ is adjacent to $z$ in $\vv'$ but not in $\vv$),
$x_z(\vv) = x_z(\vv') - 1$.
\end{itemize}
Now, consider agent $i$, for whom
$u_i(\vv)=\frac{x_v(\vv)}{x_v(\vv)+1}$
and
$u_i(\vv')=\frac{x_{v'}(\vv')}{x_{v'}(\vv')+1}$.
By definition, we have that $x_v(\vv) = |B| + |\Gamma|$ and $x_{v'}(\vv') = |B| + |\Delta|$. Furthermore, observe that $\frac{\alpha}{\alpha+1} > \frac{\beta}{\beta+1}$ for any integers $\alpha > \beta$.
As a result, have that $|\Gamma| > |\Delta|$ if $u_i(\vv) > u_i(\vv')$, and $|\Gamma| < |\Delta|$ if $u_i(\vv) < u_i(\vv')$.
Combined together with the above observations, we obtain that $\Phi(\vv)> \Phi(\vv')$ if $u_i(\vv) > u_i(\vv')$, and $\Phi(\vv)< \Phi(\vv')$ if $u_i(\vv) < u_i(\vv')$, which imply that $\Phi$ is an ordinal potential as desired.

Finally, note that the maximum value that $\Phi$ can take is at most $n(n-1)$, since every agent can have at most $n-1$ neighbors. This implies that the best-response dynamics converges to an equilibrium in at most $O(n^2)$ steps.
\end{proof}

We continue by showing tight bounds on the price of anarchy of modified $1$-Schelling games for two cases. The first is the most general one in which the topology can be any arbitrary graph, while the second is for when the topology is  a tree.

\begin{theorem}
The price of anarchy of modified $1$-Schelling games on arbitrary graphs is exactly $2-\frac{2}{n}$.
\end{theorem}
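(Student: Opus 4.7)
My plan is to prove matching upper and lower bounds on $\PoA$.

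For the upper bound $\PoA \leq 2-2/n$, I would bound numerator and denominator of the ratio separately. Since each agent has at most $n-1$ occupied neighbors and thus utility at most $\frac{n-1}{n}$, immediately $\opt \leq n-1$. The crucial claim for the denominator is that in every equilibrium $\vv$ every agent has at least one occupied neighbor, hence utility at least $\frac{1}{2}$, which gives $\SW(\vv) \geq n/2$. Combining the two yields $\PoA \leq (n-1)/(n/2) = 2-2/n$.

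To prove the claim I would argue by contradiction. Suppose some agent $i$ at $v_i$ has no occupied neighbor, so $u_i(\vv) = 0$. The equilibrium condition forces $u_i(v',\vv_{-i}) = 0$ for every empty node $v'$, which means every empty $v'$ has no occupied neighbor in $\vv$ other than possibly $v_i$. Combined with $v_i$ itself having no occupied neighbor, this implies that $G$ has no edge between $S := \{v_i\} \cup \{\text{empty nodes}\}$ and the remaining $n-1$ occupied nodes. Since $|V| > n$ and $n \geq 2$ make both sides nonempty, this contradicts connectivity of $G$.

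For the matching lower bound, I would exhibit (for even $n$) a game attaining $\PoA = 2-2/n$. The topology consists of a clique on vertices $c_1,\ldots,c_n$, a perfect matching on separate vertices $a_1,b_1,\ldots,a_{n/2},b_{n/2}$ with edges $\{a_i,b_i\}$, plus a single bridge edge $\{a_1,c_1\}$ that keeps $G$ connected. Placing one agent on every matching vertex gives each agent exactly one occupied neighbor, so $\SW = n/2$. To verify equilibrium I would check that any deviation to an empty clique vertex gives utility at most $\frac{1}{2}$: deviating to $c_1$ finds $a_1$ as its only possibly occupied neighbor (and $a_1$ becomes empty if the deviator is $a_1$ herself), while deviating to any other $c_j$ finds no occupied neighbor at all. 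On the other hand, placing all $n$ agents on the clique achieves $\SW = n-1$, so $\opt = n-1$ and the ratio is as required.

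The main obstacle is the key claim that equilibrium forces every agent to utility at least $\frac{1}{2}$; its proof is short but rests essentially on combining the equilibrium condition with the connectivity of $G$ and $|V| > n$, and is exactly what fails in the classical (unmodified) model where an isolated agent has utility $1$ by convention. The lower bound construction is then natural, with the delicate point being that a lone deviator into the socially much better but empty clique gains no occupied neighbors and therefore cannot escape the inferior matching equilibrium.
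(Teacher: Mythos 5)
Your upper bound is correct and follows the same route as the paper: $\opt \le n-1$ from the trivial utility cap $\tfrac{n-1}{n}$, and $\SW(\vv)\ge n/2$ at equilibrium because connectivity forces every agent to have an occupied neighbor. Your contradiction argument for that claim (the cut between $S=\{v_i\}\cup\{\text{empty nodes}\}$ and the other $n-1$ occupied nodes) is a more explicit version of what the paper asserts in one line, and it is sound; note only that connectivity and $|V|>n$ are exactly what make both sides of the cut nonempty.

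The gap is in the lower bound construction. As described, your edge set consists of the clique edges, the matching edges $\{a_i,b_i\}$, and the single bridge $\{a_1,c_1\}$. For $n\ge 4$ this graph is \emph{disconnected}: the pairs $\{a_j,b_j\}$ with $j\ge 2$ form isolated components with no path to the clique. The model (and your own upper bound argument) requires the topology to be connected, so this is not a valid instance of the class of games for which you are claiming the bound; a single bridge edge does not ``keep $G$ connected.'' The fix is easy but needs a little care: you cannot simply attach all $a_i$ to the same clique vertex $c_1$, since then a deviator to $c_1$ would see $n/2$ or $n/2-1$ occupied neighbors and gain utility strictly above $1/2$, destroying the equilibrium. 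Attaching $a_i$ to $c_i$ for each $i\in[n/2]$ works: every empty node is some $c_j$, which is adjacent to at most one occupied node (namely $a_j$, and none at all if $a_j$ is the deviator), so every deviation yields utility at most $1/2$ and the matching assignment with $\SW=n/2$ remains an equilibrium, while placing all agents on the clique still gives $\opt=n-1$. (The paper instead hangs a single path of $2n-3$ nodes off one clique vertex and places the agents in adjacent pairs separated by two empty nodes, which achieves the same ratio.)
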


\begin{proof}
Since the topology is a connected graph, it must be the case that, under any equilibrium assignment, every agent is connected to at least one other agent. Hence, the utility of every agent at equilibrium is at least $1/2$. On the other hand, the maximum utility an agent can obtain (at any possible assignment) is $\frac{n-1}{n}$, which happens when she is connected to all other agents. We can now conclude that the social welfare at any equilibrium $\vv$ is $\SW(\vv)\geq \frac{n}{2}$, while the optimal social welfare $\opt \leq n-1$. Consequently, the price of anarchy is at most $\frac{n-1}{\frac{n}{2}} = 2 - \frac{2}{n}$.

For the lower bound, consider a modified $1$-Schelling game with $n$ agents, in which the topology consists of a clique of size $n$ and $2n-3$ additional nodes that form a path with one node of the clique. An assignment $\vv$ that allocates all agents on the path such that the agents are connected in pairs and there are two empty nodes between any two pairs of agents, is an equilibrium. Indeed, every agent has utility $1/2$, while jumping to any empty node would give her at most the same utility. However, assigning the agents to the nodes of the clique, gives maximum utility $\frac{n-1}{n}$ to every agent, and the bound follows.
\end{proof}

Our next result shows that the price of anarchy slightly improves when the topology is more structured.

\begin{theorem}\label{thm:poa-1-tree}
The price of anarchy of modified $1$-Schelling games on trees and lines is exactly $\frac{4}{3} - \frac{2}{3n}$.
\end{theorem}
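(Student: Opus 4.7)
The plan is to establish matching upper and lower bounds of $\frac{4n-2}{3n}$, combining a structural bound on $\opt$ for tree topologies with a uniform lower bound of $n/2$ on the social welfare of any equilibrium.

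For the upper bound, I would first argue that no agent is isolated at equilibrium: such an agent has utility $0$, and since the topology is connected there must exist some empty node adjacent to an occupied one, which would strictly improve her utility after a jump. Consequently every agent has utility at least $1/2$, so any equilibrium $\vv$ satisfies $\SW(\vv) \geq n/2$. The second ingredient is $\opt \leq (2n-1)/3$. Since any assignment of $n$ agents on a tree induces a forest $F$ on $n$ vertices, it suffices to prove that $\sum_{v} d_v/(d_v+1) \leq (2n-1)/3$ for every such $F$, which I would establish by induction on $n$. In the generic case, peeling a leaf $v$ whose neighbor $u$ has degree $d_u \geq 2$ changes the sum by
\begin{align*}
\tfrac{1}{2} + \tfrac{d_u}{d_u+1} - \tfrac{d_u-1}{d_u} = \tfrac{1}{2} + \tfrac{1}{d_u(d_u+1)} \leq \tfrac{2}{3},
\end{align*}
with equality iff $d_u = 2$, so the inductive step closes (and equality across the induction in fact forces $F$ to be a path). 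The two ingredients together yield $\PoA \leq \frac{(2n-1)/3}{n/2} = \frac{4}{3} - \frac{2}{3n}$.

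For the matching lower bound, I would take a line topology with sufficiently many nodes and place the $n$ agents in $n/2$ adjacent pairs separated by at least two empty nodes each. In the resulting assignment every agent has exactly one occupied neighbor and every empty node has at most one occupied neighbor, so no agent gains by jumping; this is an equilibrium with $\SW = n/2$. The alternative assignment that places all agents on $n$ consecutive nodes of the line induces a path and achieves $\SW = (2n-1)/3$, so $\PoA \geq \frac{4}{3} - \frac{2}{3n}$.

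The main technical obstacle is the inductive bound $\sum_v d_v/(d_v+1) \leq (2n-1)/3$ on forests. The clean leaf-peeling increment of at most $2/3$ is available only when the peeled leaf's neighbor has degree at least $2$; when $F$ has a component of size $1$ (isolated vertex) or $2$ (isolated edge), the leaf-peeling step would be ``over-budget'' (an isolated edge contributes $1$ to $\SW$ while the target only shifts by $2/3$ when $n$ decreases by $2$). These boundary cases must be handled separately by removing an entire small component and checking that the resulting $\SW$ stays strictly below the target $(2n'-1)/3$ on $n'$ vertices, so the inductive invariant survives.
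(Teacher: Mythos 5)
Your proposal is correct, and its overall skeleton --- the bound $\SW(\vv)\ge n/2$ at any equilibrium, the bound $\opt\le (2n-1)/3$, and the paired-agents-on-a-line construction for tightness --- coincides with the paper's. Where you genuinely diverge is in the proof of the key estimate $\opt\le(2n-1)/3$ on trees. The paper proves it by a rearrangement argument: starting from an optimal assignment on the tree, it locates a deepest node with more than two occupied neighbors and performs a local surgery (contracting the empty nodes of one descending path and appending another path to its end) that strictly increases welfare, thereby reducing the tree case to the line case, where the optimum is easy to compute. You instead observe that the occupied nodes induce a forest and prove the purely combinatorial inequality $\sum_v d_v/(d_v+1)\le(2n-1)/3$ for every forest on $n$ vertices by leaf-peeling induction, removing isolated vertices and isolated edges as whole components. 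Your route is more self-contained and arguably tighter in its bookkeeping: the paper's surgery step is specified somewhat informally (it presupposes empty nodes on the contracted path and asserts that only two utilities change), whereas your induction requires no geometric manipulation of the topology and additionally identifies when equality holds (the induced forest is a single path). The only blemish is the parenthetical in your final paragraph: deleting an isolated edge drops $n$ by $2$ and hence the target $(2n-1)/3$ by $4/3$, not by $2/3$; since the edge contributes exactly $1\le 4/3$ to the sum, the whole-component removal is in fact under budget, exactly as your fix requires. As in the paper, your tight example needs $n$ even for the pairing, which is a harmless restriction.
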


\begin{proof}
We begin by computing an upper bound on the maximum social welfare.
Let $\calG=(\nn,T)$ be a modified $1$-Schelling game in which the topology $T$ is a tree.
We claim that there exists a modified $1$-Schelling game $\calG'=(\nn,L)$ in which the topology $L$ is a line with the same number of nodes as $G$, such that the optimal social welfare of $\calG$ is upper-bounded by the optimal social welfare of $\calG'$.
This is trivial if the optimal assignment at $\calG'$ is actually a path or a collection of paths.

Now, assume that at the optimal assignment $\vv^*$ of $\calG$ there exists an agent that occupies some node that is adjacent to strictly more than two agents. Let $i$ be an agent that occupies a node $v$ such that $x(v|\vv^*)=x>2$ and $x(z|\vv^*)\leq 2$ for all nodes $z$ that are descendants of $v$ in $\vv^*$.
Let $P_1$ and $P_2$ be two paths that start from $v$ (excluding $v$) and end at the leaf nodes $z_1$ and $z_2$, respectively.
We claim that the social welfare will increase if we first remove the empty nodes of $P_2$, and then append $P_1$ at the end of $P_2$.
Indeed, note that the utility of only two agents will change; one (extreme) agent on $P_2$ will get utility $2/3$ as opposed to $1/2$ that she had before, while $i$ will get utility $\frac{x-1}{x}$ as opposed to  $\frac{x}{x+1}$ that she had before. Consequently, the total difference in utility is
\begin{align*}
\frac{2}{3} - \frac{1}{2} + \frac{x-1}{x} - \frac{x}{x+1}= \frac{1}{6} - \frac{1}{x(x+1)} > 0,
\end{align*}
since $x>2$ by assumption.
By repeatedly transforming the initial assignment according to the above procedure, we end up with a single path which has strictly more social welfare, as desired.

It should be relatively easy to see that  the assignment that maximizes the social welfare when the topology is a line is such that all agents form a single connected component. Then, exactly two agents have only one neighbor and utility $1/2$, while all other agents have two neighbors and utility $2/3$ each. Hence, the optimal social welfare of a game with a tree topology is $\opt \leq \frac{2}{3}(n-2) + 1 = \frac{2}{3}n - \frac{1}{3}$.

To prove our bound on the price of anarchy, it suffices to observe that the utility of any agent at equilibrium $\vv$ is at least $1/2$, and therefore $\SW(\vv)\geq \frac{n}{2}$. In fact, there exists a game that has exactly this much social welfare at equilibrium: consider a modified $1$-Schelling game in which the topology is a line consisting of $2n-2$ nodes, where $n$ is even. An assignment $\vv$ that allocates all agents on the line such that agents are connected in pairs and there are two empty nodes between any two pairs of agents, is an equilibrium; observe that each agent has utility exactly $1/2$, and jumping to an empty node would again give her exactly the same utility. Consequently, the price of anarchy of games with tree and line topologies is exactly $\frac{\frac{2}{3}n - \frac{1}{3}}{\frac{1}{2}n} = \frac{4}{3} - \frac{2}{3n}$, as desired.
\end{proof}

We now turn our attention to the price of stability. By arguing about the structure of the optimal assignment, and by exploiting the properties of a variant of the best-response dynamics which gives priority to agents of minimum utility, we are able to show an upper bound on the price of stability. We remark that this is the first upper bound on the price of stability in the literature on Schelling games that holds for arbitrary graphs, albeit only when there is a single type of agents.

\begin{theorem}
The price of stability of modified $1$-Schelling games is at most $3/2$.
\end{theorem}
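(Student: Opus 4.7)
The plan is to exhibit an equilibrium $\vv$ with $\SW(\vv) \geq \frac{2}{3}\opt$; since equilibria exist by Theorem~\ref{thm:1-existence}, this immediately yields $\PoS \leq 3/2$. The first step is a structural claim about any optimal assignment $\vv^*$: no agent is isolated in $\vv^*$. Indeed, if some agent $i$ were isolated, the connectedness of $G$ together with $|V| > n$ would produce an empty node adjacent to an already-occupied one; relocating $i$ there would raise her utility to at least $1/2$, strictly improve the utility of her new neighbor, and leave every other utility unchanged (since $i$ had no old neighbors), contradicting optimality. Consequently every agent in $\vv^*$ has utility at least $1/2$, so $\opt \geq n/2$, and the same reasoning shows that any equilibrium $\vv$ also has no isolated agent, hence $\SW(\vv) \geq n/2$.

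Next, I would split into two cases depending on the magnitude of $\opt$. If $\opt \leq \frac{3n}{4}$, the two bounds above immediately yield $\opt/\SW(\vv) \leq (3n/4)/(n/2) = 3/2$ for any equilibrium $\vv$, so the claim holds. The interesting case is $\opt > \frac{3n}{4}$. Here a short counting argument gives useful structure: if $a$ denotes the number of agents with utility exactly $1/2$ in $\vv^*$, then $\opt \leq \frac{a}{2} + (n-a) = n - \frac{a}{2}$, which forces $a < n/2$. In other words, in the hard case a strict majority of the agents already have degree at least two in the subgraph of $G$ induced on the occupied nodes of $\vv^*$.

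For this case I would start from $\vv^*$ itself and run a variant of the best-response dynamics in which, at each step, the agent with the currently minimum utility is selected and performs her best-response move whenever an improving deviation exists. By the potential function $\Phi$ constructed in the proof of Theorem~\ref{thm:1-existence}, this dynamics converges to an equilibrium $\vv$ in polynomially many steps. The remaining task is then to prove $\SW(\vv) \geq \frac{2}{3}\opt$ by tracking the total welfare change along the run.

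The main obstacle, and the hardest step, is controlling the welfare loss along this dynamics. When the min-utility agent $i$ jumps, her own utility strictly increases and her new neighbors each gain a degree (so a positive utility increment), but every old neighbor loses a degree; in particular a degree-one old neighbor is left isolated and loses the full $1/2$ of her utility. The min-utility priority ensures that every old neighbor of $i$ had utility at least $u_i$, which lower-bounds their initial degrees and thus upper-bounds the per-neighbor welfare loss in terms of $u_i$. The core of the argument is an amortized accounting that handles cascades, in which a newly minimum-utility old neighbor triggers another move; combined with the structural fact $a < n/2$ from the hard case (so that relatively few agents ever lie on such cascades), the plan is to show that the total welfare loss across the run is at most $\opt/3$, giving $\SW(\vv) \geq \frac{2}{3}\opt$ and hence $\PoS \leq \frac{3}{2}$.
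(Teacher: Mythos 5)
Your setup (no isolated agents in an optimum or in an equilibrium, hence $\opt\geq n/2$ and $\SW(\vv)\geq n/2$, disposing of the case $\opt\leq 3n/4$) is fine, but the entire hard case $\opt>3n/4$ rests on an amortized accounting that you never carry out: you assert, as a ``plan,'' that the total welfare lost along the min-utility best-response run is at most $\opt/3$. That is essentially a restatement of the theorem, and the ingredients you name do not obviously deliver it. The potential function only bounds the \emph{number} of steps, not the welfare change per step; a single jump can strand a degree-one old neighbor and cost $1/2$, and cascades can repeat this. Moreover, the structural fact $a<n/2$ concerns the agents with utility exactly $1/2$ in $\vv^*$, whereas the agents damaged during the dynamics are the (initially high-utility) old neighbors of whoever jumps, a set that evolves and is not controlled by $a$. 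As written, the crux of the proof is missing.

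The paper closes exactly this hole with two observations you do not have. First, if $\vv^*$ contains \emph{any} agent of utility $1/2$, then $\vv^*$ is already an equilibrium: otherwise such an agent could profitably jump to the witnessing empty node, gaining at least $2/3-1/2=1/6$, her new neighbors all gain, and her unique old neighbor (who has $y\geq 2$ neighbors) loses only $\frac{1}{y(y+1)}\leq 1/6$, so the jump would strictly increase $\SW$, contradicting optimality; hence $\PoS=1$ in that case. Second, in the remaining case every agent has utility at least $2/3$ in $\vv^*$, and the min-utility dynamics maintains the invariant that each intermediate assignment either has minimum utility $2/3$ again or is an equilibrium with at most two agents of utility $1/2$; the terminal equilibrium therefore has $\SW(\vv)\geq \frac{2}{3}(n-2)+1$, which against $\opt\leq n-1$ gives the ratio $3/2$. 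You would need to either prove your amortized loss bound in full or replace it with an argument of this kind.
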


\begin{proof}
Consider any modified $1$-Schelling game, and let $\vv^*$ be its optimal assignment.
We first claim that if there exists an agent with utility $1/2$ in $\vv^*$, then $\vv^*$ must be an equilibrium, and thus the price of stability is $1$.
To see this, suppose otherwise that $\vv^*$ is not an equilibrium and there exist agents with utility $1/2$.
Since someone can benefit by jumping to an empty node $v$, it must be the case that there exists an agent $i$ with utility $1/2$ who can increase her utility by jumping  to $v$ too. The utility of $i$ will then increase by at least $2/3-1/2 = 1/6$, the utility of the agents in $N(v|\vv^*)$ will increase by some strictly positive quantity (since the number of their neighbors increases by one), while the utility of $i$'s single neighbor in $\vv^*$, who has $y$ neighbors in $\vv^*$ (including $i$), will decrease by $\frac{y}{y+1}-\frac{y-1}{y}=\frac{1}{y(y+1)} \leq \frac{1}{6}$, where the inequality follows since the topology is a connected graph, which implies that $y \geq 2$. Since $|x(v | \vv^*)| \geq 1$, the jump of $i$ to $v$ leads to a new assignment with strictly larger social welfare than $\vv^*$, which contradicts the optimality of $\vv^*$. So, it suffices to consider the case where all agents have utility at least $2/3$ in the optimal assignment.

We now claim that starting from $\vv^*$ the best-response dynamics according to which the agent with the minimum utility jumps in each step, terminates at an equilibrium $\vv$ in which there are at most two agents with utility $1/2$, while all other agents have utility at least $2/3$. This will imply that the maximum social welfare we can achieve at equilibrium is at least $\SW(\vv)\geq (n-2)\frac{2}{3} + 1$. Since the optimal social welfare is at most $n-1$, we will obtain an upper bound of $\frac{n-1}{(n-2)\frac{2}{3} + 1} \leq \frac{3}{2}$ on the price of stability, as desired.

We use a recursive proof to show that starting with any assignment where the minimum utility among all agents is at least $2/3$, we will either reach another assignment with minimum utility $2/3$, or an equilibrium where at most two agents have utility $1/2$. This is sufficient by the fact that the best response dynamics is guaranteed to terminate to an equilibrium (recall from the proof of Theorem \ref{thm:1-existence} that the game admits a potential function).

Let $m$ denote the minimum number of neighbors an agent has in the current assignment. Let $a$ be an agent that has minimum utility $\frac{m}{m+1}$. If $m\geq 3$, then $a$'s jump to an empty node will lead to a new assignment where every agent has at least $2$ neighbors, as desired. If $m=2$, then $a$'s jump leads to at most two agents with utility exactly $1/2$ in the new assignment. If this assignment is an equilibrium, then we are done. Otherwise, we distinguish between the following two cases:

\medskip
\noindent
\underline{Case (1):} There are two agents $i$ and $j$ who have utility $1/2$ and are connected to each other.
According to the best-response dynamics we consider, one of these agents, say $i$, will jump to an empty node to increase her utility to $2/3$. The jump of $i$ will leave $j$ with utility $0$, who subsequently will jump to get utility at least $1/2$.
If $j$'s best response yields her utility exactly $1/2$, then there is no empty node adjacent to strictly more than one agents, which implies that the resulting assignment is an equilibrium, in which $j$ is the only agent with utility $1/2$.
Otherwise, all agents have utility at least $2/3$ in the new assignment.

\medskip
\noindent
\underline{Case (2):} There is either only one agent $i$ with utility $1/2$, or there is also another agent $j$ with utility $1/2$ such that $i$ and $j$ are not neighbors. If $i$ can increase her utility by jumping, then she will no longer have utility $1/2$, but such a jump might leave her neighbor with exactly one neighbor (and utility $1/2$). However, observe that no other agent can end up with utility $1/2$ after $i$'s jump, which means that the number of agents with utility $1/2$ in the resulting assignment cannot increase. Again, we distinguish between Cases (1) and (2).

\medskip

Therefore, by starting with the optimal assignment, the process described above will terminate at an equilibrium with at most two agents with utility $1/2$, and the bound follows.
\end{proof}

We also show a lower bound on the price of stability, which establishes that even the best equilibrium assignment (in terms of social welfare) is not always optimal.

\begin{theorem}\label{thm:pos-1}
The price of stability of modified $1$-Schelling games is at least $15/14-\varepsilon$, for any constant $\varepsilon > 0$.
\end{theorem}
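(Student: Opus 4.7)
The plan is to construct a family of graphs, parameterized by a scaling factor $m$, on which the optimal assignment achieves social welfare exceeding that of every equilibrium by a factor tending to $15/14$. The target ratio is motivated by comparing two natural arrangements of five agents: five agents in a path yield utilities $1/2, 2/3, 2/3, 2/3, 1/2$ for a total social welfare of $3 = 15/5$, while five agents arranged as a star $K_{1,4}$ yield utility $4/5$ at the center and $1/2$ at each of the four leaves, for a total of $14/5$. A graph whose optimum is forced onto a path configuration while every equilibrium collapses onto a star-like configuration gives precisely $\PoS = 15/14$.

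To instantiate this, I will first design a small gadget containing five agents together with a few extra empty nodes such that: \textbf{(i)} the path arrangement of the five agents achieves $\SW = 3$ and is in fact the unique optimum of the gadget; \textbf{(ii)} the path arrangement is not a Nash equilibrium, because some path endpoint has access to an empty ``attractor'' node that is adjacent to at least two path agents and can thereby strictly improve her utility from $1/2$ to $2/3$ by jumping; and \textbf{(iii)} every equilibrium of the gadget has $\SW$ at most $14/5$. To obtain the full $15/14 - \varepsilon$ bound for arbitrary $\varepsilon > 0$, I then chain $m$ copies of the gadget via thin bridges designed so as to contribute only $O(1)$ social welfare in both the optimum and in any equilibrium. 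With $m$ copies the ratio becomes $(3m + O(1)) / ((14/5)m + O(1))$, which tends to $15/14$ as $m \to \infty$, so for any $\varepsilon > 0$ some finite $m$ already yields $\PoS \geq 15/14 - \varepsilon$.

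The main obstacle is establishing property \textbf{(iii)}: once the path endpoint jumps to the attractor, its former neighbor is left isolated and may subsequently jump itself, and in principle this best-response cascade could stabilize at a denser configuration with $\SW$ strictly larger than $14/5$. The gadget must therefore be rigid enough that every equilibrium reachable from any starting assignment is essentially star-like; verifying this requires an exhaustive case analysis over all placements of the five agents, and in particular ruling out ``hybrid'' configurations that interpolate between the path and the star while remaining stable. A secondary subtlety will be to ensure that the bridges between gadgets do not enable cross-gadget deviations (for instance, promoting a ``star leaf'' agent in one gadget into a ``path interior'' position in another) that stabilize a higher-welfare equilibrium than the concatenated stars.
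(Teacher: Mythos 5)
There is a fatal inconsistency between properties \textbf{(i)} and \textbf{(ii)} of your gadget, so the construction cannot exist. Suppose the five-agent path is the optimum and some endpoint $e$ (utility $1/2$, whose unique neighbor $e'$ is a path-interior agent with two neighbors) can strictly improve by jumping to an empty attractor $v$ adjacent to at least two agents other than $e$. Executing that very jump changes the social welfare by at least
\begin{align*}
\underbrace{\left(\tfrac{2}{3}-\tfrac{1}{2}\right)}_{\text{gain of } e} \ -\ \underbrace{\left(\tfrac{2}{3}-\tfrac{1}{2}\right)}_{\text{max loss of } e'} \ +\ \sum_{p}\left(\tfrac{x_p+1}{x_p+2}-\tfrac{x_p}{x_p+1}\right) \ >\ 0,
\end{align*}
where the sum ranges over the agents adjacent to $v$: each of them gains a strictly positive amount, $e$ gains at least $1/6$, and $e'$ drops from two neighbors to at most one, losing at most $1/6$. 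So the path was not optimal, contradicting \textbf{(i)}. This is not a repairable detail of your particular gadget: the paper proves, as the first step of its PoS \emph{upper} bound, that any welfare-maximizing assignment of a modified $1$-Schelling game containing an agent with utility exactly $1/2$ is automatically an equilibrium. A path of five agents always has two such endpoints, so any instance whose optimum is a path configuration has price of stability $1$. The entire path-versus-star framing is therefore a dead end, despite the suggestive arithmetic $3/(14/5)=15/14$.

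To force $\PoS>1$ you must engineer an optimum in which \emph{every} agent has at least two neighbors (utility at least $2/3$) and which is nevertheless unstable. This is what the paper's construction does: the optimum places agents on a chain of mutually fully-connected triples so that most agents have six same-type neighbors (utility $6/7$ each), while the unique equilibrium parks a large block of agents where each sees only four neighbors (utility $4/5$ each); the bound really comes from the per-agent ratio $\frac{6/7}{4/5}=\frac{15}{14}$, not from comparing a path to a star. A secondary problem with your plan is the chaining step: once gadgets are bridged, ``every equilibrium of the gadget'' is not a well-defined object, since equilibria are global and agents can migrate across bridges to denser configurations; the paper handles the analogous difficulty by a global counting argument showing that no agent can occupy the high-welfare region in \emph{any} equilibrium, and this uniqueness argument is the genuinely delicate part of the proof.
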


\begin{proof}
Consider a modified $1$-Schelling game with $n = 3\lambda+10$ agents, where $\lambda$ is a positive integer whose value will be determined later. The topology consists of multiple components: a clique $C$ with $6$ nodes, and $\lambda+2$ independent sets $J$, $Z$, $I_1, ..., I_\lambda$ such that $|J| = 4$, $|Z| = 3\lambda$ and $|I_\ell| = 3$ for every $\ell \in [\lambda]$; observe that there are $6\lambda+10$ nodes in total. These components are connected as follows: Every node of $C$ is connected to every node of $J$; every node of $J$ is connected to every node of $Z$; one node of $Z$ is connected to one node of $I_1$; every node of $I_\ell$ is connected to every node of $I_{\ell+1}$ for $\ell \in [\lambda-1]$. The topology is depicted in Fig.~\ref{fig:pos-1}.

\begin{figure}[t]
\centering
\includegraphics[scale=0.45]{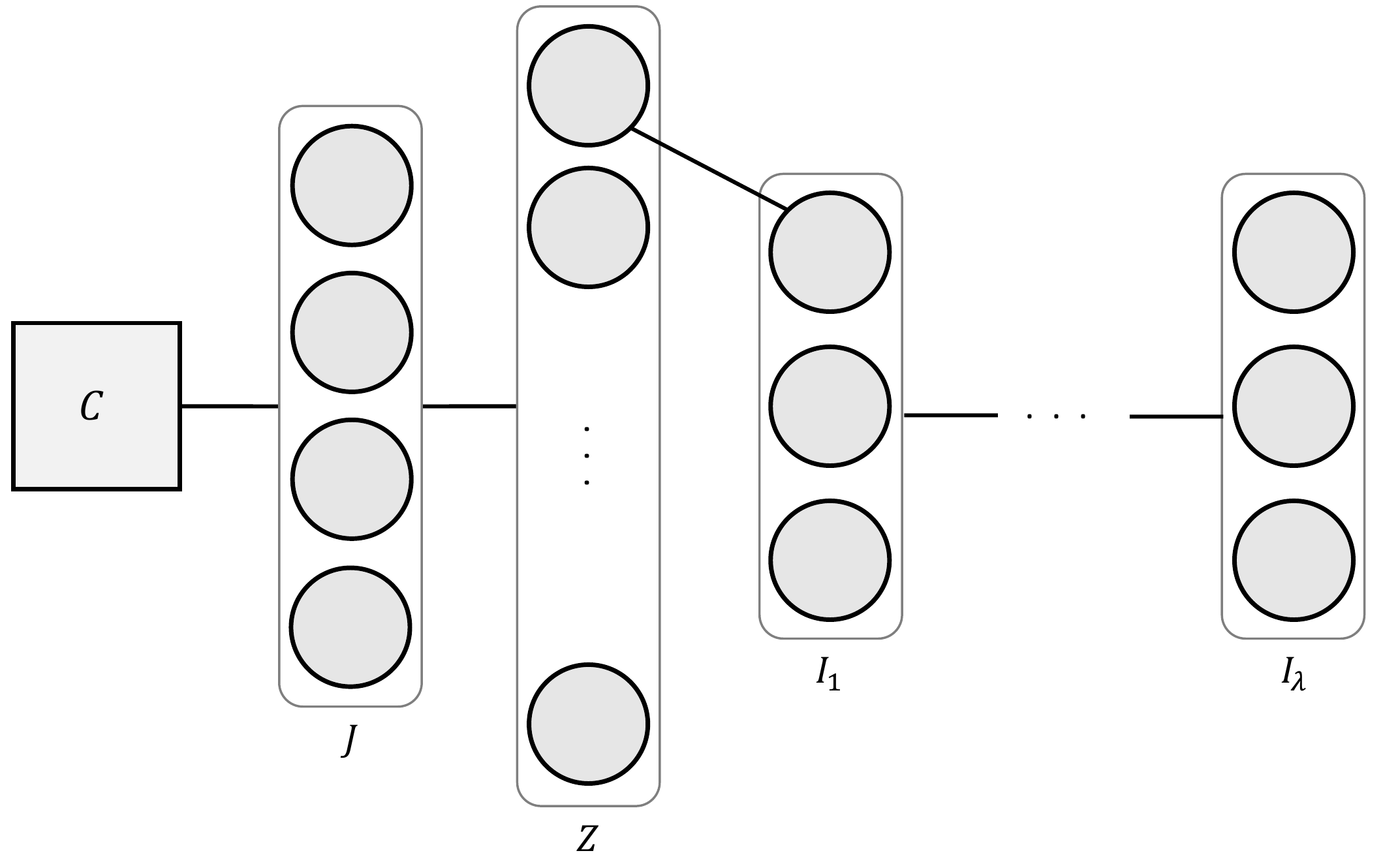}
\caption{The topology of the game used for the proof of the price of stability lower bound in Theorem~\ref{thm:pos-1}. The edges connecting different components indicate that each node of one component is connected to each node of the other one.}
\label{fig:pos-1}
\end{figure}

The optimal social welfare is at least as high as the social welfare of the assignment according to which the agents occupy all nodes except for those in $Z$. Since the agents in $C$ have $9$ neighbors each, the agents in $J$ have $6$, the agents in $I_1 \cup I_\lambda$ have $3$, and the agents in $I_2 \cup ... \cup I_{\lambda-1}$ have $6$ again, we obtain
\begin{align*}
\opt &\geq 6\cdot\frac{9}{10}+4\cdot\frac{6}{7}+6\cdot\frac{3}{4}+3(\lambda-2)\cdot\frac{6}{7}\\
&= \frac{18}{7}\lambda +\frac{573}{70}.
\end{align*}

Now, consider the  assignment $\vv$ where the agents are placed at the nodes of $C \cup J \cup Z$.
The agents in $C$ have $9$ neighbors each, the agents in $J$ have $3\lambda+6$, and the agents in $Z$ have $4$.
Since every agent has utility at least $4/5$ and would obtain utility at most $1/2$ by jumping to any of the empty nodes, $\vv$ is an equilibrium. Its social welfare is
\begin{align*}
\SW(\vv) &= 6\cdot\frac{9}{10}+4\cdot\frac{3\lambda+6}{3\lambda+7}+3\lambda\cdot\frac{4}{5}\\
&= \frac{12}{5}\lambda +\frac{3(47\lambda+103)}{5(3\lambda+7)}.
\end{align*}

We will now show that $\vv$ is the unique equilibrium of this game. Assume otherwise that there exists an equilibrium where at least one agent is at a node in $I_\ell$ for some $\ell \in [\lambda]$. Let $i$ be an agent occupying a node of $I_{\ell^*}$, where $\ell^*$ is the largest index among all $\ell \in [\lambda]$ such that $I_\ell$ contains at least one occupied node.
Then, the utility of agent $i$ is at most $3/4$ (realized in case $I_{\ell^*-1}$ is fully occupied).
Since agent $i$ has no incentive to jump to a node in $C \cup J \cup Z$, it must be the case that either there is no empty node therein, or each of these sets contains at most three occupied nodes. The first case is impossible since $|C \cup J \cup Z| = 3\lambda+10 = n$ and we have assumed that agent $i$ occupies a node outside this set. Similarly, the second case is impossible since it implies that $C \cup J \cup Z$ should contain at most $9$ occupied nodes, but the remaining $n-9 = 3\lambda+1$ agents do not fit in the $3\lambda$ nodes outside of this set.
Therefore, the only possible equilibrium assignments are such that there is no agent outside $C \cup J \cup Z$, which means that $\vv$ is the unique equilibrium.

By the above discussion, we have that the price of stability is
\begin{align*}
\frac{\opt}{\SW(\vv)} \geq \frac{ \frac{18}{7}\lambda +\frac{573}{70} }{ \frac{12}{5}\lambda +\frac{3(47\lambda+103)}{5(3\lambda+7)}},
\end{align*}
which tends to $15/14$ as $\lambda$ becomes arbitrarily large.
\end{proof}

We conclude this section with a result regarding the complexity of computing an assignment with maximum social welfare. Inspired by a corresponding result of \citet{jump}, we show that, even in the seemingly simple case of modified $1$-Schelling games, maximizing the social welfare is NP-hard.

\begin{theorem}\label{thm:hardness}
Consider a modified $1$-Schelling game and let $\xi$ be a rational number. Then, deciding whether there exists an assignment with social welfare at least $\xi$ is NP-complete.
\end{theorem}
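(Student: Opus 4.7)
Membership in NP is immediate: an assignment $\vv$ is itself a certificate of polynomial size, and $\SW(\vv)$ is a sum of $n$ rationals of the form $x_\ell(v_i|\vv)/(1+x(v_i|\vv))$, which can be computed exactly and compared to $\xi$ in polynomial time.

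The substance of the proof is the NP-hardness, which I plan to establish by a direct reduction from \textsc{Clique}. Given an instance $(G,k)$ of \textsc{Clique}, I would set up the modified $1$-Schelling game $\calG=(\nn,G)$ with a single type of $n=k$ agents and ask for social welfare at least $\xi=k-1$. Without loss of generality one may take $k\geq 3$ (the cases $k\leq 2$ are decidable in polynomial time) and $G$ connected with $|V(G)|>k$, since strict inequality can be enforced, if necessary, by attaching a pendant vertex to $G$, and such a vertex cannot belong to any $k$-clique for $k\geq 3$. This preserves the \textsc{Clique} answer while simultaneously fulfilling the standing model assumption $|V|>n$.

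For the forward direction, if $G$ contains a $k$-clique $S^*$, placing the $k$ agents on the vertices of $S^*$ gives every agent exactly $k-1$ friends in her neighborhood, so each attains utility $\tfrac{k-1}{k}$ and the social welfare equals $k\cdot\tfrac{k-1}{k}=k-1=\xi$.

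For the converse I would exploit a tight upper bound on utilities. In any $1$-Schelling game with $n=k$ agents, an agent $i$ occupying node $v_i$ satisfies $x(v_i|\vv)\leq n-1=k-1$, hence $u_i(\vv)\leq\tfrac{k-1}{k}$. Therefore $\SW(\vv)\leq k\cdot\tfrac{k-1}{k}=k-1$, with equality iff $u_i(\vv)=\tfrac{k-1}{k}$ for every $i$; this forces each occupied node to be adjacent in $G$ to every other occupied node, so the $k$ occupied vertices induce a $k$-clique in $G$. The main (and essentially only) obstacle I foresee is reconciling the topology requirement $|V|>n$ with connectivity and with the invariance of the \textsc{Clique} answer under the construction, which is exactly what the pendant-vertex padding discussed above is designed to handle; once this is arranged, the two implications reduce to the short tight-upper-bound calculation described here.
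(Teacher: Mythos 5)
Your proof is correct and follows essentially the same route as the paper: a reduction from \textsc{Clique} with $n$ equal to the target clique size, the input graph as topology, and threshold $\xi = k-1$, using the tight bound $u_i \leq \frac{k-1}{k}$ with equality forcing a clique. You are in fact slightly more careful than the paper, which glosses over the model's requirement that $|V| > n$ and that the topology be connected; your pendant-vertex padding handles this cleanly.
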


\begin{proof}
Membership in NP can be easily verified by counting the social welfare for a given assignment. To show hardness, we use a reduction from {\sc Clique}. An instance $\langle G, \lambda \rangle$ of this problem consists of a graph $G$ and an integer $\lambda$. $\langle G, \lambda \rangle$ is a yes-instance if $G$ contains a clique of size $\lambda$, that is, it contains a subset of $\lambda$ nodes such that every two of them are adjacent; otherwise it is a no-instance. Given $\langle G, \lambda \rangle$, we can straightforwardly define a modified $1$-Schelling game with $n=\lambda$ agents and topology the graph $G$. If $G$ admits a clique of size $\lambda$, then we can achieve social welfare $\xi = \lambda -1$ (which is the maximum possible for any game with $\lambda$ agents) by assigning the agents to the nodes of the clique. Then, every agent has $\lambda-1$ neighbors and utility $\frac{\lambda-1}{\lambda}$, leading to a social welfare of $\lambda -1$. Otherwise, if there is no clique of size $\lambda$, then at least two agents will have utility at most $\frac{\lambda-2}{\lambda-1}<\frac{\lambda-1}{\lambda}$, while every other agent will have utility at most $\frac{\lambda-1}{\lambda}$, yielding social welfare strictly smaller than $\xi$.
\end{proof}

\section{Multi-type Games}\label{sec:multi}
In this section, we consider the case of strictly more than one type of agents. We will show bounds on the price of anarchy and the price of stability, both for general games as well as for interesting restrictions on the number of agents per type or on the structure of the topology.

\subsection{Arbitrary Graphs}\label{sec:poa-general}
We start by showing tight bounds on the price of anarchy for games on arbitrary graphs when there are at least two agents per type. When there is only one agent per type, any assignment is an equilibrium, and thus the price of anarchy is $1$. When there exists a type with at least two agents and one type with a single agent, the price of anarchy can be unbounded:  Consider a star topology and an equilibrium assignment according to which the center node is occupied by this lonely agent; then, all agents have utility $0$. In contrast, the assignment according to which an agent with at least one friend occupies the center node guarantees positive social welfare.

\begin{theorem}\label{thm:poa-k}
The price of anarchy of modified $k$-Schelling games with at least two agents per type is exactly $\frac{2n(n-k)}{n+2}$.
\end{theorem}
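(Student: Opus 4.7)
The plan is to establish the matching upper bound $\PoA \leq 2n(n-k)/(n+2)$ and then to exhibit a tight construction.

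For the upper bound, I would first bound $\opt$ by $n-k$: since $u_i \leq (n_\ell-1)/n_\ell$ for every agent of type $\ell$ (using $x_\ell(v_i)\leq n_\ell-1$ and monotonicity of $x/(x+1)$), summing yields $\opt \leq \sum_\ell n_\ell\cdot(n_\ell-1)/n_\ell = n-k$. The core of the argument is then to show $\SW(\vv)\geq (n+2)/(2n)$ at every equilibrium $\vv$. Since $|V|>n$ and the topology is connected, there is an empty node adjacent to at least one agent; I would pick such a node $v^*$ minimizing $x(v^*)$. The transparent case is $x(v^*)=1$: letting $c$ be the unique agent-neighbor of $v^*$ (of type $\ell^*$), some $c'\in T_{\ell^*}\setminus\{c\}$ exists (because $n_{\ell^*}\geq 2$) and is not adjacent to $v^*$, so the equilibrium condition applied to $c'$'s jump to $v^*$ gives $u_{c'}\geq 1/2$. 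When $n_{\ell^*}=2$, the only friend of $c'$ is $c$, who must therefore be adjacent to $c'$ (else $u_{c'}=0$), and this forces $u_c\geq 1/(1+x(v_c))\geq 1/n$; when $n_{\ell^*}\geq 3$ at least two such $c'$ exist and each has utility at least $1/2$. In either subcase $\SW(\vv)\geq 1/2+1/n=(n+2)/(2n)$, so combined with $\opt\leq n-k$ the bound on the PoA follows.

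The main technical obstacle is the remaining case, when every empty node with an agent-neighbor has at least two of them. Here the single-neighbor argument is unavailable, and I would combine the non-adjacent jump bound $u_a\geq x_\ell(v^*)/(1+x(v^*))$ with the adjacent jump bound $u_a\geq (x_\ell(v^*)-1)/x(v^*)$, summed over types, then case-analyze on the distribution of $N(v^*)$ across types and on the friend-edges that the equilibrium condition forces to exist.

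For the matching lower bound, I would construct a single game attaining $\PoA = 2n(n-k)/(n+2)$. Pick one type $\ell^*$ with exactly two agents and distribute the remaining $n-2$ agents among the other $k-1$ types with at least two each (possible since $n\geq 2k$). The topology is the disjoint union of the cliques $K_{n_1},\dots,K_{n_k}$ together with an $(n+1)$-vertex star with hub $s_2$ and leaves $s_1,s_3,\dots,s_n,s_{n+1}$, plus a single edge joining $s_2$ to an arbitrary vertex of $K_{n_{\ell^*}}$ for connectedness. Placing each type in its own clique gives every agent only same-type neighbors, so $\opt = n-k$. For the bad equilibrium, put the two type-$\ell^*$ agents at $s_1$ and $s_2$, the remaining $n-2$ agents at $s_3,\dots,s_n$, and leave $s_{n+1}$ empty; a direct computation gives $u_{s_1}=1/2$, $u_{s_2}=1/n$, and utility $0$ for every other agent, yielding $\SW=(n+2)/(2n)$. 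Checking all single-agent jumps to $s_{n+1}$ or to any empty clique vertex confirms that no move is strictly improving, so this assignment is an equilibrium, and the ratio $\opt/\SW$ attains $2n(n-k)/(n+2)$.
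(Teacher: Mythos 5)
Your lower-bound construction is correct and is essentially the paper's (a star whose hub sees all $n-1$ other agents, holding one of the two agents of a two-agent type, with per-type cliques available for the optimum), and the bound $\opt\le n-k$ is fine. The gap is in the upper bound: the key inequality $\SW(\vv)\ge\frac{n+2}{2n}$ at every equilibrium is only established when your chosen empty node $v^*$ has exactly one occupied neighbor. For $x(v^*)\ge 2$ you write that you ``would combine'' the two jump bounds and ``case-analyze on the distribution of $N(v^*)$ across types''; that case analysis is precisely the substance of the upper bound and is left undone, so as written the proof is incomplete. (Also, minimizing $x(v^*)$ over empty nodes buys you nothing, since nothing forces the minimum to be $1$.)

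The deferred case is not actually an obstacle, and it is worth seeing why: the per-type accounting you use for $x(v^*)=1$ works verbatim for every $x=x(v^*)\ge1$. Write $x_\ell=x_\ell(v^*)$. For a type $\ell$ with $n_\ell\ge3$, each of the $n_\ell-x_\ell$ agents not adjacent to $v^*$ must have utility at least $\frac{x_\ell}{x+1}$ and each of the $x_\ell$ adjacent ones at least $\frac{x_\ell-1}{x}\ge\frac{x_\ell-1}{x+1}$, so the type contributes at least $\frac{(n_\ell-1)x_\ell}{x+1}\ge\frac{2x_\ell}{x+1}$. For a type with $n_\ell=2$ one shows $x_\ell\le1$ (two non-adjacent friends both seeing $v^*$ would jump to it), and if $x_\ell=1$ the two agents must be adjacent, contributing at least $\frac{1}{x+1}+\frac{1}{n}$. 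Summing over types: either no two-agent type touches $v^*$, in which case $\SW(\vv)\ge\frac{2x}{x+1}\ge1$, or some two-agent type does, in which case $\SW(\vv)\ge\frac{x}{x+1}+\frac{1}{n}\ge\frac12+\frac1n$; both quantities are at least $\frac{n+2}{2n}$. So larger $x$ only improves the bound, and the true worst case is exactly the $x=1$, two-agent-type configuration that your star construction realizes.
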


\begin{proof}
For the upper bound, consider an arbitrary modified $k$-Schelling game in which there are $n_\ell \geq 2$ agents of type $\ell \in [k]$. Clearly, the maximum utility that an agent of type $\ell$ can get is $\frac{n_\ell-1}{n_\ell}$ when she is connected to all other agents of her type, and only them. Consequently, the optimal social welfare is
\begin{align}\label{eq:poa-opt}
\opt \leq \sum_{\ell \in [k]} n_\ell \frac{n_\ell-1}{n_\ell} = n-k.
\end{align}

Now, let $\vv$ be an equilibrium assignment, according to which there exists an empty node $v$ which is adjacent to $x_\ell = x_\ell(v)$ agents of type $\ell \in [k]$, such that $x_\ell \geq 1$ for at least one type $\ell$; let $x = x(v) = \sum_{\ell \in [k]} x_\ell$. We will now count the contribution of each type $\ell$ to $\SW(\vv)$.
\begin{itemize}
\item \underline{$n_\ell \geq 3$.}
In order to not have incentive to jump to $v$, every agent of type $\ell$ must have utility at least $\frac{x_\ell}{x+1}$ if she is not adjacent to $v$, or $\frac{x_\ell-1}{x} \geq \frac{x_\ell-1}{x+1}$ otherwise. Hence, the contribution of all agents of type $\ell$ to the social welfare is at least
\begin{align*}
(n_\ell - x_\ell) \frac{x_\ell}{x+1} + x_\ell \frac{x_\ell-1}{x+1}
= \frac{(n_\ell-1)x_\ell}{x+1} \geq \frac{2x_\ell}{x+1}.
\end{align*}

\item \underline{$n_\ell=2.$}
Let $i$ and $j$ be the two agents of type $\ell$.
First observe that it cannot be the case that $x_\ell =2$ since then both $i$ and $j$ would have utility $0$ and incentive to jump to $v$ to connect to each other, and thus increase their utility to positive. So, $x_\ell \leq 1$. If $x_\ell=1$ and $i$ is adjacent to $v$, then $i$ and $j$ must be neighbors, since otherwise they would both have utility $0$, and $j$ would want to jump to $v$ to increase her utility to positive.
Hence, $i$ has utility at least $\frac{1}{n}$ and $j$ has utility at least $\frac{1}{x+1}$. Overall, the contribution of the two agents of type $\ell$ is
\begin{align*}
x_\ell \left( \frac{1}{x+1} + \frac{1}{n} \right).
\end{align*}
\end{itemize}
Let $\Lambda = \{ \ell \in [k]: n_\ell = 2\}$ be the set of all types with exactly two agents.
By the above discussion, the social welfare at equilibrium is
\begin{align*}
\SW(\vv) &\geq \sum_{\ell \in [k] \setminus \Lambda} \frac{2x_\ell}{x+1} + \sum_{\ell \in \Lambda} x_\ell \left( \frac{1}{x+1} + \frac{1}{n} \right) \\
&= \sum_{\ell \in [k]} \frac{x_\ell}{x+1} + \sum_{\ell \in [k] \setminus \Lambda} \frac{x_\ell}{x+1}  + \sum_{\ell \in \Lambda} \frac{x_\ell}{n} \\
&= \frac{x}{x+1} + \sum_{\ell \in [k] \setminus \Lambda} \frac{x_\ell}{x+1}  + \sum_{\ell \in \Lambda} \frac{x_\ell}{n}.
\end{align*}
If $\Lambda = \varnothing$, since $x \geq 1$, we obtain
\begin{align*}
\SW(\vv) \geq \frac{x}{x+1} + \sum_{\ell \in [k]} \frac{x_\ell}{x+1} = \frac{2x}{x+1} \geq 1.
\end{align*}
Otherwise, we have
\begin{align*}
\SW(\vv) \geq \frac{x}{x+1} + \frac{1}{n} \geq \frac{1}{2} + \frac{1}{n} = \frac{n+2}{2n}.
\end{align*}
Since $n \geq 2$, it is $\frac{n+2}{2n} \leq 1$, and thus $\SW(\vv) \geq \frac{n+2}{2n}$ in any case. By \eqref{eq:poa-opt}, the price of anarchy is at most $\frac{2n(n-k)}{n+2}$.

\begin{figure}[t]
\centering
\includegraphics[scale=0.45]{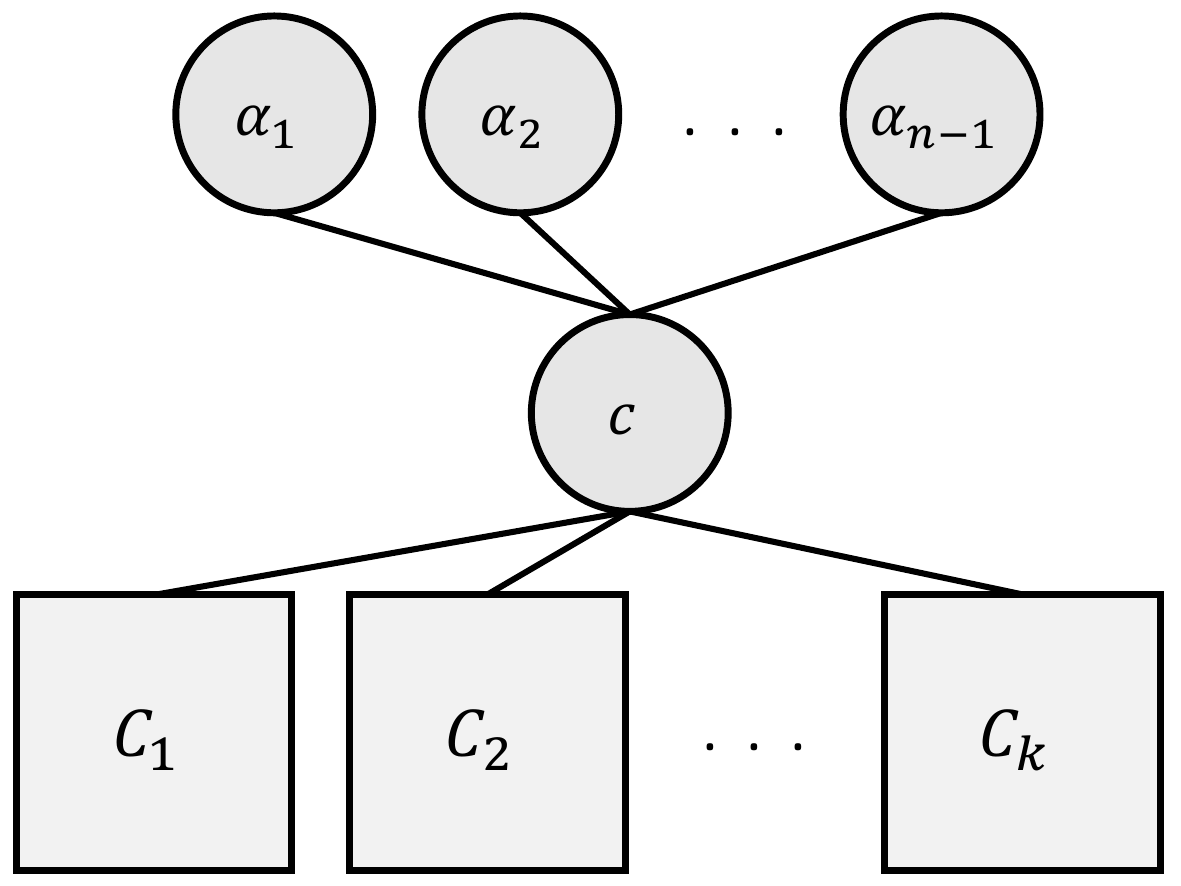}
\caption{The  topology of the game used for the proof of the lower bound in Theorem~\ref{thm:poa-k}.
The big squares $C_1, ..., C_k$ correspond to cliques such that $c$ is connected only to a single node of each $C_\ell$.}
\label{fig:poa}
\end{figure}

Observe that the proof of the upper bound implies that the worst case occurs when at equilibrium there exists an empty node that is adjacent to a single agent of some type $\ell$ such that there are only two agents of type $\ell$. Using this as our guide for the proof of the lower bound, consider a modified Schelling game with $n$ agents who are partitioned into $k$ types such that there are $n_1=2$ agents of type $1$ and $n_\ell \geq 2$ agents of type $\ell \in [k]$. The topology consists of a star with a center node $c$ and $n-1$ leaf nodes $\{\alpha_1, ..., \alpha_{n-1}\}$, as well as $k$ cliques $\{C_1, ..., C_k\}$ such that $C_\ell$ has size $n_\ell$. These subgraphs are connected as follows: $c$ is connected to a single node of $C_\ell$ for each $\ell \in [k]$; see Fig.~\ref{fig:poa}.

Clearly, in the optimal assignment the agents of type $\ell \in [k]$ are assigned to the nodes of clique $C_\ell$ so that every agent is connected to all other agents of her type, and only them. Consequently, the optimal social welfare is exactly
$$\sum_{\ell \in [k]} n_\ell \frac{n_\ell-1}{n_\ell} = n-k.$$
On the other hand however, there exists an equilibrium assignment where $c$ is occupied by one of the agents of type $1$ and all other agents occupy the leaf nodes $\alpha_1, ..., \alpha_{n-1}$. Then, only the two agents of type $1$ have positive utility, in particular, $1/n$ and $1/2$, respectively. Hence, the price of anarchy is at least
$$\frac{n-k}{\frac{1}{2} + \frac{1}{n}}=\frac{2n(n-k)}{n+2}.$$
This completes the proof.
\end{proof}

From the above theorem it can be easily seen that the price of anarchy can be quite large in general. This motivates the question of whether improvements can be achieved for natural restrictions. One such restriction is to consider balanced games in which the $n$ agents are evenly distributed to the $k$ types, so that there are exactly $n/k$ agents per type. In the following we will focus exclusively on balanced games.

\begin{theorem}\label{thm:poa-k-balanced}
The price of anarchy of balanced modified $k$-Schelling games with at least two agents per type is exactly $2k$.
\end{theorem}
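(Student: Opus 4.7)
The plan is to match the general upper bound of $2k$ with a lower bound construction whose PoA tends to $2k$ as $n\to\infty$.

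For the upper bound, I would adapt the proof of Theorem~\ref{thm:poa-k} to the balanced setting. Given any equilibrium $\vv$, connectivity of the topology ensures the existence of an empty node $v$ with $x = x(v) \geq 1$ agent neighbors. Since each type has $n/k \geq 2$ agents, when $n/k \geq 3$ the contribution of each type $\ell$ to the social welfare is at least $(n/k - 1)x_\ell/(x+1)$ (the same calculation as in the proof of Theorem~\ref{thm:poa-k} for $n_\ell \geq 3$). Summing over types yields
\begin{align*}
\SW(\vv) \geq \frac{(n/k - 1)x}{x+1} \geq \frac{n/k - 1}{2},
\end{align*}
and combining with $\opt \leq n - k$ gives $\PoA \leq 2k$. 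For the boundary case $n/k = 2$ (i.e., $n = 2k$), the bound from Theorem~\ref{thm:poa-k} directly yields $\PoA \leq 2k^2/(k+1) < 2k$.

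For the lower bound, I would construct a family of games, parameterized by $n$, whose PoA tends to $2k$. The topology consists of a star with center $c$ and $n-1$ leaves, together with $k$ disjoint cliques $C_1, \ldots, C_k$ each of size $n/k$, and $c$ connected to a single designated node $c_{\ell,1}$ of each $C_\ell$. Placing all agents of type $\ell$ at clique $C_\ell$ attains social welfare $n - k$, so $\opt = n-k$. For the bad equilibrium, I would place one distinguished type-1 agent $a_1$ at $c$ and distribute the remaining $n-1$ agents arbitrarily over the leaves. Then $a_1$ has utility $(n/k-1)/n$, each of the $n/k - 1$ type-1 leaf agents has utility $1/2$ (their sole neighbor being $a_1$), and every leaf of a type other than $1$ has utility $0$; so the social welfare evaluates to $(n-k)(n+2)/(2nk)$. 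Dividing, $\PoA \geq 2nk/(n+2)$, which approaches $2k$ from below as $n \to \infty$.

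The main obstacle will be verifying that the above assignment is indeed an equilibrium. Since the only empty nodes are the clique nodes, I need to check that no agent strictly improves her utility by jumping to any such node. The key observations are: (i) any agent jumping to a clique node $c_{\ell,1}$ adjacent to $c$ sees only $a_1$ as an agent neighbor, so a type-1 leaf matches her current utility of $1/2$ while any other agent matches her current $0$; (ii) jumping to any other clique node yields utility $0$ because that node has no agent neighbors at all; and (iii) when $a_1$ herself jumps to any clique node, $c$ becomes empty, leaving her with no friend neighbors and utility $0$. These case analyses are routine but necessary, and once established the PoA computation follows immediately.
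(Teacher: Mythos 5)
Your proof is correct, and the upper bound is essentially the paper's argument. One small remark there: the per-type accounting around the empty node $v$, namely $(n_\ell - x_\ell)\frac{x_\ell}{x+1} + x_\ell\frac{x_\ell-1}{x+1} = \frac{(n_\ell-1)x_\ell}{x+1}$, requires no assumption on $n_\ell$ beyond $n_\ell \geq 2$, so the paper obtains $\SW(\vv) \geq \frac{x}{x+1}\cdot\frac{n-k}{k} \geq \frac{n-k}{2k}$ uniformly and your separate appeal to Theorem~\ref{thm:poa-k} for the case $n=2k$ is valid but unnecessary. The lower bounds genuinely differ. Your star-plus-cliques family (the balanced version of the construction in Theorem~\ref{thm:poa-k}) is a correct equilibrium, but the agent at the center keeps positive utility $\frac{n/k-1}{n}$, so its price of anarchy is $\frac{2nk}{n+2}$ and only tends to $2k$ as $n\to\infty$; since the price of anarchy of a class is a supremum, this still proves the theorem. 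The paper instead fixes $n=4k$ and modifies the star by hanging a pendant path $\beta_1,\beta_2,\beta_3$ off one of its leaves, placing the three non-center type-$1$ agents there; the center agent then sees only enemies and has utility $0$, the equilibrium welfare is exactly $3/2=\frac{n-k}{2k}$, and the bound $2k$ is attained exactly by a single finite instance rather than only in the limit. So the two routes trade simplicity of the topology against exact attainment of the bound; both are sound.
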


\begin{proof}
For the upper bound, consider an arbitrary balanced modified $k$-Schelling game in which there are $n/k \geq 2$ agents of each type $\ell \in [k]$. By \eqref{eq:poa-opt}, we have that the optimal social welfare is $\opt \leq n-k.$

Now, let $\vv$ be an equilibrium assignment according to which there exists an empty node $v$ which is adjacent to $x_\ell = x_\ell(v)$ agents of type $\ell \in [k]$, such that $x_\ell \geq 1$ for at least one type $\ell$; let $x = x(v) = \sum_{\ell \in [k]} x_\ell$. In order to not have incentive to jump to $v$, each of the $\frac{n}{k}-x_\ell$ agents of type $\ell \in [k]$ that is not adjacent to $v$ must have utility at least $\frac{x_\ell}{x+1}$, and each of the $x_\ell$ agents of type $\ell$ that is adjacent to $v$ must have utility at least $\frac{x_\ell-1}{x} \geq \frac{x_\ell-1}{x+1}$. Hence,
\begin{align}\label{eq:balanced-eq-x}
\SW(\vv) \geq \sum_{\ell \in [k]} \bigg( \left( \frac{n}{k} - x_\ell \right) \frac{x_\ell}{x+1} + x_\ell \frac{x_\ell-1}{x+1} \bigg)
= \frac{x}{x+1} \cdot \frac{n-k}{k}.
\end{align}
Since $x\geq 1$, the social welfare is
\begin{align}\label{eq:balanced-eq}
\SW(\vv) \geq \frac{n-k}{2k},
\end{align}
which yields that the price of anarchy is at most $2k$.

\begin{figure}[t]
\centering
\includegraphics[scale=0.45]{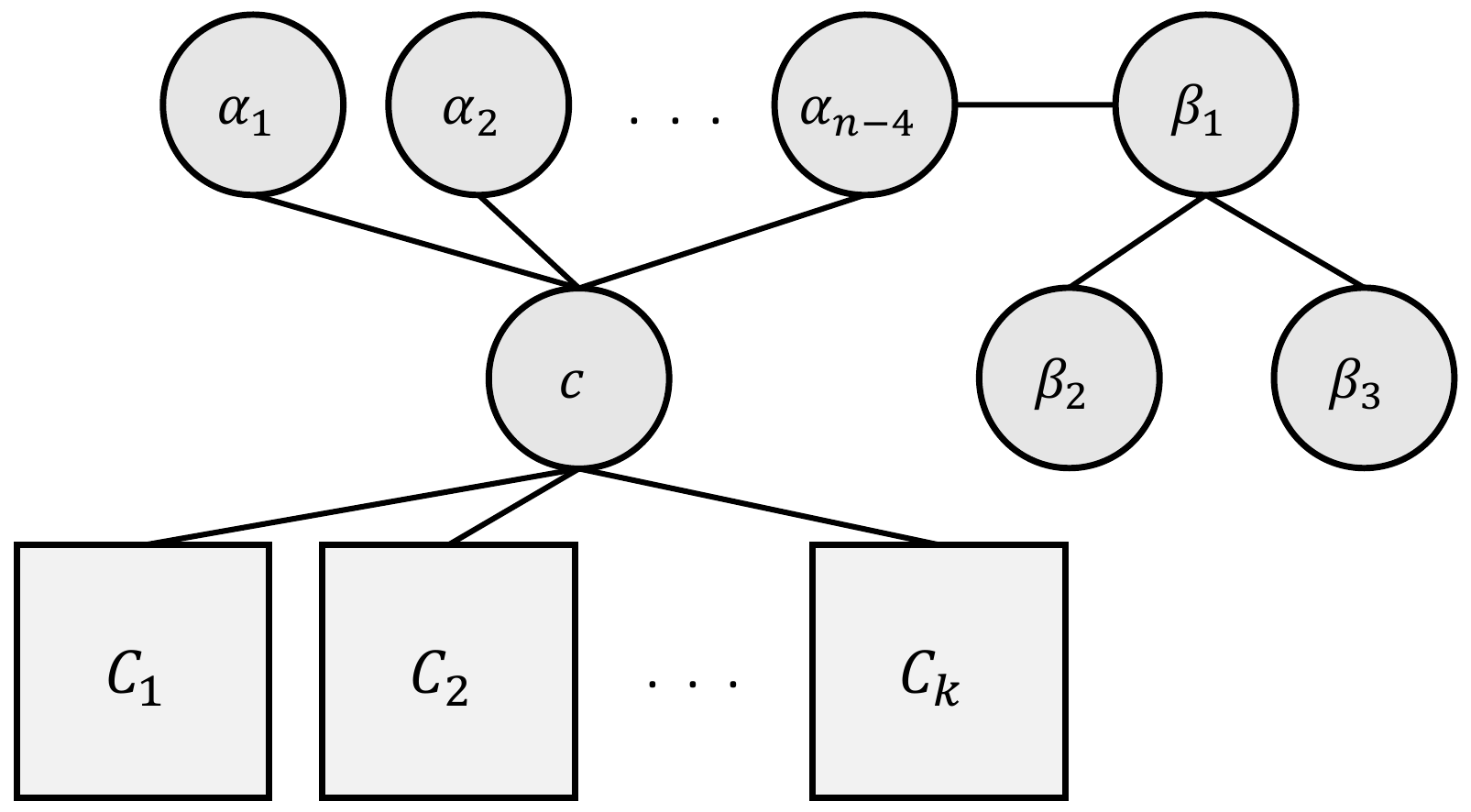}
\caption{The topology of the game used for the proof of the lower bound in Theorem~\ref{thm:poa-k-balanced}.
The big squares $C_1, ..., C_k$ correspond to cliques such that $c$ is connected only to a single node of each $C_\ell$.}
\label{fig:poa-balanced}
\end{figure}

For the lower bound, consider a balanced modified $k$-Schelling game with four agents per type; so, there are $n=4k$ agents.
The topology consists of several components. There is a star-like tree with root node $c$, which has $n-4$ children $\{\alpha_1, ..., \alpha_{n-4}\}$ such that the first $n-3$ are leaves, while $\alpha_{n-4}$ has a single child $\beta_1$ which, in turn, has two children $\beta_2$, and $\beta_3$ which are leaves. There are also $k$ cliques $\{C_1, ..., C_k\}$ such that each $C_\ell$ has size $n/k=4$. These subgraphs are connected as follows: $c$ is connected to a single node of $C_\ell$ for each $\ell \in [k]$; see Fig.~\ref{fig:poa-balanced}.

In the optimal assignment, the agents of type $\ell \in [k]$ are assigned to the nodes of clique $C_\ell$ so that every agent is connected to all other agents of her type, and only them. Consequently, the optimal social welfare is exactly $n-k = 3k$.
On the other hand, there exists an equilibrium assignment where $c$ is occupied by an agent of type $1$, the nodes $\alpha_1, ..., \alpha_{n-4}$ are occupied by the agents of type different than $1$, and the nodes $\beta_1, \beta_2, \beta_3$ are occupied by the remaining agents of type $1$. Then, the only agents with positive utility are those occupying the $\beta$ nodes. In particular, each of them has utility exactly $1/2$, and therefore the price of anarchy is at least
$$\frac{3k}{3 \cdot \frac{1}{2} }=2k.$$
This completes the proof.
\end{proof}

We continue by presenting a lower bound on the price of stability for modified $2$-Schelling games, which holds even for the balanced case.

\begin{theorem}\label{thm:pos-2}
The price of stability of modified $2$-Schelling games is at least $4/3-\varepsilon$, for any constant $\varepsilon>0$.
\end{theorem}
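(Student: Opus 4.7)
The plan is to adapt the approach used for the one-type lower bound in \Cref{thm:pos-1} to the two-type setting. I would construct a parametrized family of balanced modified $2$-Schelling games $\calG_\lambda$ whose topology has two ``clique'' components that support a nearly optimal segregated assignment, together with an ``equilibrium trap'' component that any equilibrium is forced to populate heavily and whose internal structure caps the per-agent utility inside the trap to roughly $3/4$.

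Concretely, the skeleton I have in mind is two disjoint cliques $C_R$ and $C_B$ of equal size $n/2$, attached through a few bridge edges to a trap region built out of small monochromatic gadgets (for example, $4$-cliques of a single type, or short monochromatic paths). The design would ensure that (i)~placing the red agents inside $C_R$ and the blue agents inside $C_B$ yields welfare at least $n-2$, since every agent then has utility $(n/2-1)/(n/2)$, and (ii)~this segregated assignment is not an equilibrium, because at least one agent on the boundary of a clique strictly prefers to jump to an empty node inside the trap.

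The heart of the argument would be an upper bound on the equilibrium welfare of the form $\tfrac{3n}{4}+O(1)$, proved by a case analysis on how many agents of each color occupy $C_R$, $C_B$ and the trap. For every distribution that leaves sufficient empty capacity inside the cliques, I would exhibit an empty node whose attractiveness violates the equilibrium condition for some agent currently in the trap; and for the remaining ``trap-heavy'' distributions, I would directly upper bound the sum of utilities using the sparsity of the trap gadgets. Combining the two bounds yields $\PoS(\calG_\lambda)\geq\tfrac{n-2}{3n/4+O(1)}$, which tends to $4/3$ as $\lambda$ (and hence $n$) grows, delivering the bound $4/3-\varepsilon$ for every fixed $\varepsilon>0$.

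The main obstacle is the equilibrium characterization. With two types, the number of qualitatively different equilibrium configurations grows substantially compared to the one-type case: one must handle mixed placements in which some colors partially occupy their own clique while others migrate to the trap, and show that the bridge edges and trap geometry combine to make every such hybrid placement either refutable (some agent profits by jumping) or already low-welfare. This is the analog of the uniqueness-of-equilibrium argument at the end of the proof of \Cref{thm:pos-1}, now carried out in a two-dimensional (color $\times$ region) state space, and calibrating the trap's internal density precisely so that the per-agent welfare in every equilibrium is capped at essentially $3/4$, rather than at some weaker bound like $4/5$, is the most delicate part of the construction.
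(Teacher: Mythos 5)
Your high-level plan (a parametrized family, a near-optimal segregated assignment of welfare $\approx n$, and an argument capping the welfare of \emph{every} equilibrium at $\approx 3n/4$) targets the right numbers, but the concrete construction you sketch has a structural flaw that I do not see how to repair. If the topology contains two disjoint cliques $C_R$ and $C_B$ of size $n/2$ each, then the fully segregated assignment gives every agent utility $\frac{n/2-1}{n/2+O(1)}$, and for it to fail to be an equilibrium some clique agent must strictly prefer an empty trap node $v$. A short calculation shows this forces $v$ to have $\Omega(n)$ neighbors inside $C_R$ (its occupied neighborhood under the segregated assignment must be almost entirely red and of size at least roughly $\frac{n/2}{c+1}$ for bridge degree $c$). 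Such a high-degree node then sabotages part (i) of your plan in the other direction: in any putative low-welfare equilibrium where $C_R$ is substantially occupied by red agents, $v$ (or a node like it) offers utility close to $1$ and every agent with utility below $3/4$ wants to jump there; and if $C_R$ is \emph{not} substantially occupied, you must separately rule out the equilibrium in which the reds simply fill $C_R$ and the blues fill $C_B$, which as sketched is stable because the sparse trap offers nothing better. So either the optimum (or an assignment of welfare $n-O(1)$ near it) is itself an equilibrium and the price of stability collapses to $1+o(1)$, or the trap cannot cap equilibrium welfare at $\frac{3n}{4}+O(1)$. The phrase ``any equilibrium is forced to populate the trap heavily'' is exactly the claim that needs a mechanism, and two disjoint cliques plus a sparse gadget region does not supply one. (A side issue: gadgets in the topology cannot be ``monochromatic''---colors belong to agents, not nodes---so you cannot build type-specific structure into the graph.)

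The paper's construction resolves this tension differently: there is a single clique $C$, a small independent set $I$ of size $\alpha$ acting as a \emph{shared bottleneck}, and a large independent set $J$ of size $n/2$ reachable only through $I$. The near-optimal assignment deliberately sacrifices a few red agents (placing them in $J$ with utility $0$) so that $I$ can be monochromatically blue and the many blue agents in $J$ get utility $\frac{\alpha}{\alpha+1}\approx 1$; this assignment is not an equilibrium precisely because $I$ is unbalanced. The equilibrium analysis then shows that in \emph{any} equilibrium $|r_I-b_I|\le 2$, so every agent in $J$ sees an essentially half-red, half-blue neighborhood and gets utility at most $\approx 1/2$, which caps the equilibrium welfare at $\approx \frac{n}{2}+\frac{n}{2}\cdot\frac12=\frac{3n}{4}$. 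Note that the $3/4$ arises as the \emph{average} of utilities $\approx 1$ (in $C\cup I$) and $\approx 1/2$ (in $J$), not as a per-agent cap of $3/4$ inside a trap. If you want to salvage your approach, you should replace the two disjoint cliques by a single dense region and funnel half of the agents through a bottleneck whose color balance is forced at equilibrium; that is the missing idea.
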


\begin{proof}
Consider a balanced modified $2$-Schelling game with $n$ agents, such that half of them are red and half of them are blue.
We set $y=n/2$ and let $\alpha<y$ be an odd positive number to be defined later. The topology consists of a clique $C$ with $y-\alpha+1$ nodes, and two independent sets $I$, $J$ with $|I| = \alpha$ and $|J| = y$.
Every node in $I$ is connected to every node in $C \cup J$; see Fig.~\ref{fig:pos-2}.

\begin{figure}[t]
\centering
\includegraphics[scale=0.45]{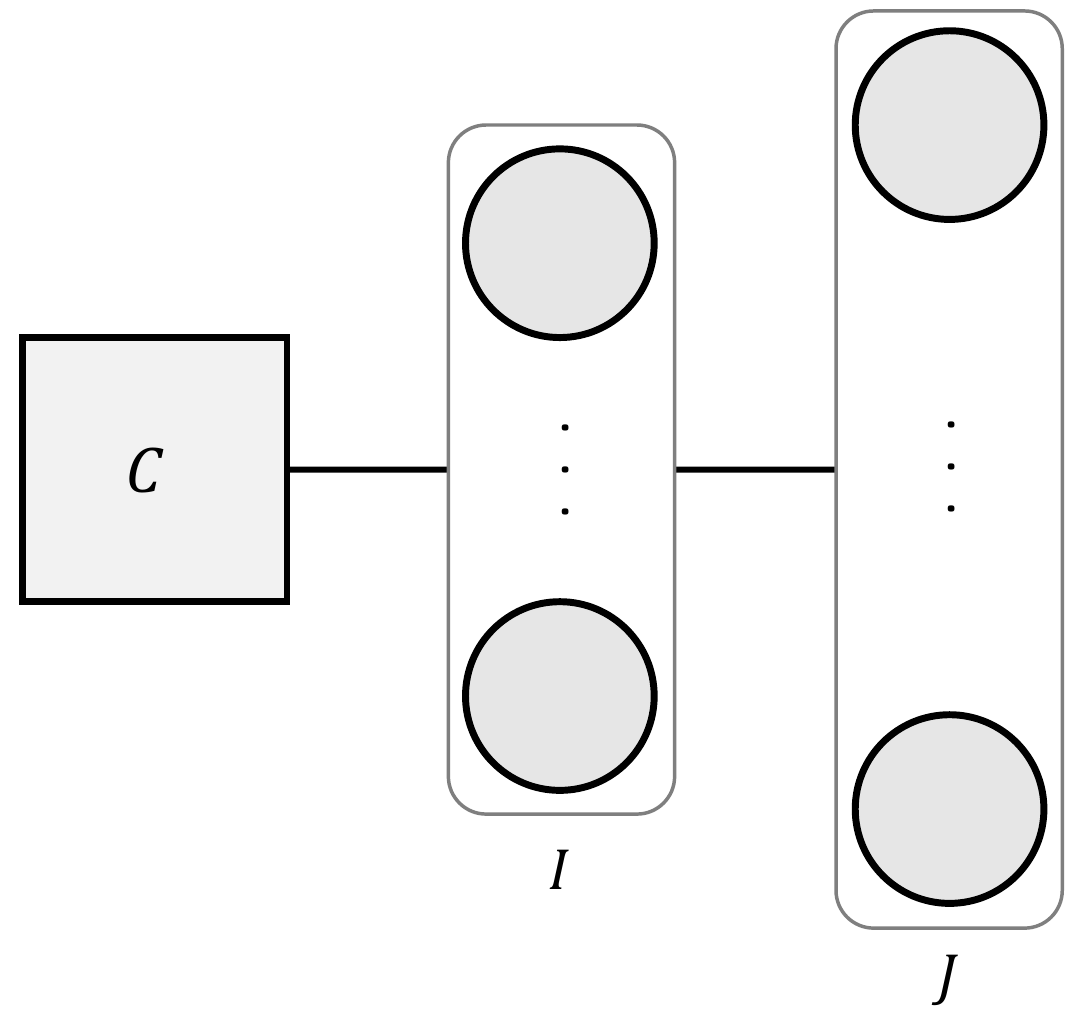}
\caption{The topology of the game used in the proof of the price of stability lower bound in Theorem~\ref{thm:pos-2}. An edge between two different component indicates that every node in a component is connected to every node of the other component.}
\label{fig:pos-2}
\end{figure}

The optimal social welfare is at least as high as that of the assignment according to which all nodes of $C$ are occupied by red agents, each node of $I$ is occupied by a blue agent, while the remaining red and blue agents occupy nodes of $J$; note that a node of $J$ remains empty.
The red agents at $C$ have utility $\frac{y-\alpha}{y+1}$, the blue agents at $I$ have utility $\frac{y-\alpha}{2y-\alpha+1}$, the red agents at $J$ have utility $0$, and the blue agents at $J$ have utility $\frac{\alpha}{\alpha+1}$. Putting everything together, we have that
\begin{align}\label{eq:pos-k-opt}\nonumber
\opt &\geq (y+1-\alpha)\frac{y-\alpha}{y+1}+\alpha\frac{y-\alpha}{2y-\alpha+1}+(y-\alpha)\frac{\alpha}{\alpha+1}\\\nonumber
&= (y-\alpha)\left(\frac{y-\alpha}{y+1}+\frac{\alpha}{\alpha+1}\right)+\frac{y-\alpha}{y+1}+\alpha\frac{y-\alpha}{2y-\alpha+1}\\
&\geq (y-\alpha)\left(\frac{y-\alpha}{y+1}+\frac{\alpha}{\alpha+1}\right),
\end{align}
where, the second inequality holds since $y>\alpha$.

Our next step is to argue about the structure of any equilibrium assignment. Consider an equilibrium $\vv$ and let $r_C$, $r_I$, and $r_J$ be the number of red agents in $C$, $I$, and $J$, respectively. We define $b_C$, $b_I$, and $b_J$ for the blue agents accordingly.
We first claim that $|r_I-b_I|\leq 2$. Assume otherwise that $r_I>b_I+2$ (without loss of generality). This implies that the red agents in $I$ and the blue agents in $J$ have utility strictly less than  $1/2$; the existence of at least one blue agent in $J$ is guaranteed by the fact that there can be at most one empty node in $J$ and there are at least two red agents are in $I$. We now enumerate the empty node:
\begin{itemize}
\item \underline{The empty node is in $J$.} Then, any red agent in $I$ has incentive to jump to the empty node, as then she would obtain utility $\frac{r_I-1}{r_I+b_I}> 1/2$.

\medskip

\item \underline{The empty node is in $I$.} Then, any blue agent in $J$ has incentive to jump to the empty node since her utility would become $\frac{y-b_I-1}{2y-r_I-b_I}\geq 1/2$.

\medskip

\item \underline{The empty node is in $C$.}
If $r_C \geq b_C$, then a red agent in $I$ has incentive to jump to the empty node since her utility would be $\frac{r_C+r_I-1}{r_C+b_C+r_I+b_I}>1/2$. Otherwise, a blue agent in $J$ has incentive to jump to the empty node since her utility would be $\frac{b_C+b_I}{r_C+b_C+r_I+b_I+1}$ which is strictly larger than her current utility of $\frac{b_I}{r_I+b_I+1}$; this holds since $\frac{b_I}{r_I+b_I+1}<1/2$ and $\frac{b_C}{r_C+b_C}>1/2$.
\end{itemize}
We also claim that $\max\{r_I,b_I\} \leq (\alpha+1)/2$. This holds since $\alpha$ is odd, $r_I+b_I \in \{\alpha-1,\alpha\}$,  and $|r_I-b_I|\leq 2$.
Consequently, the social welfare of any equilibrium $\vv$ is
\begin{align}\label{eq:pos-k-eq}
\SW(\vv) &\leq y +y\frac{(\alpha+1)/2}{\alpha}= \frac{3\alpha+1}{2\alpha}y.
\end{align}
In the first inequality, the first term bounds from above the utility from the (at most) $y+1$ agents in $C\cup I$, each of which has utility at most $y/(y+1)$, while the second term bounds from above the utility from the (at most) $y$ agents in $J$; each such agent has at most $(\alpha+1)/2$ neighbors of the same type and at least $\alpha-1$ neighbors in total.

We now show that there exists an equilibrium for this game. Consider the assignment $\hat{\vv}$ where $C$ hosts $y+1-\alpha$ red agents, $I$ hosts $(\alpha-1)/2$ red and $(\alpha+1)/2$ blue agents, while $J$ hosts $(\alpha-1)/2$ red and $y-(\alpha+1)/2$ blue agents; thus, a node in $J$ remains empty. It is not hard to see that no agent has an incentive to jump to the empty node.

Since there is at least one equilibrium assignment for the game, the proof of the lower bound on the price of stability follows by (\ref{eq:pos-k-opt}) and (\ref{eq:pos-k-eq}). In particular, we have
\begin{align*}
\PoS &\geq \frac{(y-\alpha)(\frac{y-\alpha}{y+1}+\frac{\alpha}{\alpha+1})}{\frac{3\alpha+1}{2\alpha}y}\\
&= \frac{(4\alpha^2+2\alpha)y^2 -(6\alpha^3+2\alpha^2)y +2\alpha^4}{(3\alpha^2+4\alpha+1)y^2 +(3\alpha^2+4\alpha+1)y},
\end{align*}
which tends to $4/3$ by taking the limit of $y$ and $\alpha$ to infinity.
\end{proof}

\subsection{Line Graphs}\label{sec:poa-lines}
We now turn our attention to balanced modified Schelling games on restricted topologies. We start with the case of line graphs, and show the following statement.

\begin{theorem}\label{thm:poa-k-line}
The price of anarchy of balanced modified $k$-Schelling games on a line is exactly $2$ for $k=2$, and exactly $k+1/2$ for $k \geq 3$.
\end{theorem}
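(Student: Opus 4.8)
The plan is to follow the two‑sided template used for the earlier price of anarchy results: first pin down the optimal social welfare on a line, then prove a matching lower bound on the welfare of \emph{every} equilibrium, and finally exhibit games whose worst equilibrium makes the two quantities differ by the claimed factor. For the optimum I would start from the observation that on a line every node has degree at most two, so each agent has utility at most $2/3$, and that any bichromatic adjacency strictly wastes welfare relative to separating the two types. Arguing exactly as in the tree‑to‑line reduction of \cref{thm:poa-1-tree}, the welfare‑maximizing assignment places the agents of each type in a single monochromatic block, the blocks being pairwise separated by empty nodes; each block of size $n/k$ contributes $1+\tfrac23(\tfrac nk-2)$, so $\opt=\tfrac{2n-k}{3}$. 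This holds for all $k$ and is the numerator of every ratio below.

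The heart of the argument is a lower bound on $\SW(\vv)$ at an equilibrium $\vv$, and here the block/gap structure of the line is what improves on the generic balanced bound \eqref{eq:balanced-eq}. I would split on the lengths of the maximal runs of empty nodes. If no gap has length one, then every empty node adjacent to an agent sees a single occupied neighbor, so a jump there yields utility $\tfrac12$; pushing this constraint shows that a type with any agent of utility below $1/2$ cannot touch a gap, which forces every block to be monochromatic and hence $\SW(\vv)\ge n/2$, far above the threshold. The binding case is therefore that some empty node $v$ has $x(v)=2$, i.e.\ a length‑one gap facing two block ends of types $a,b$. Applying \eqref{eq:balanced-eq-x} with $x=2$ already gives $\SW(\vv)\ge \tfrac23\cdot\tfrac{n-k}{k}$, and reading the same inequality type‑by‑type shows in addition that the two facing types $a,b$ must keep \emph{all} of their agents at utility at least $1/3$ (a cheaper agent would jump to $v$), while the remaining ``interior'' types may sit at utility $0$.

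Combining the contribution of the facing types with \eqref{eq:balanced-eq-x} and the value of $\opt$ yields $\PoA\le k+\tfrac12$ once $n$ is large; the value $k+\tfrac12$ is the supremum, so the remaining work is a careful finite‑$n$ accounting (of the block endpoints forced to utility $1/2$, and of the at most one sub‑$1/3$ agent that each facing type can hide at a block end) showing the bound holds for every admissible $n$. For $k=2$ the same dichotomy is cleaner: there is no room for an interior victim type, so both types must average at least $1/3$ and $\PoA\le 2$.

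For tightness I would build the worst equilibria from blocks glued by length‑one gaps whose two sides carry different types, so that the only available jump merely reproduces utility $1/3$. For $k=2$ the gadget is the repeated block $[1\,1\,2\,2]$, where the bulk agents have utility exactly $1/3$ and the endpoints are neutralized by the mixed gaps, driving the average utility to $1/3$ and the ratio to $2$. For $k\ge 3$ the gadget uses two ``anchor'' types ($1$ and $2$) in same‑type pairs at the block boundaries, separated by a run cycling through the ``victim'' types $3,\dots,k$ so that each victim is flanked by two distinct enemies and has utility $0$; placing two agents of every type in each block and separating consecutive blocks by mixed $1$–$2$ gaps makes the configuration an equilibrium whose welfare matches the lower bound. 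The main obstacle I anticipate is precisely the tight constant in the upper bound for $k\ge 3$: the one‑empty‑node inequality \eqref{eq:balanced-eq-x} overshoots $k+\tfrac12$ for small $n$, so the delicate step is to argue that the welfare contributed by the at‑least‑$1/3$ facing types, together with the unavoidable block‑endpoint utilities, always suffices to cap the ratio at exactly $k+\tfrac12$.
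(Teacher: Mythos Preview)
Your plan has the right two-sided shape, but there is a genuine gap in the $k\ge 3$ upper bound that your outline does not close. You fix the numerator once and for all as $\opt=\tfrac{2n-k}{3}$, and you lower-bound the equilibrium welfare via \eqref{eq:balanced-eq-x} at a length-one gap, obtaining $\SW(\vv)\ge\tfrac{2}{3}\cdot\tfrac{n-k}{k}$. But then the resulting ratio is $\tfrac{k(2n-k)}{2(n-k)}$, which is \emph{strictly larger} than $k+\tfrac12$ for every $n<k(k+1)$; in particular, at $n=3k$ it equals $\tfrac{5k}{4}$. The tight instance in the paper has $n=3k$ and a \emph{single} empty node, so your numerator is simply too generous there: with one empty node you cannot separate $k\ge3$ monochromatic blocks, and the optimal welfare drops to $\opt_1\le\tfrac{2n-2k+2}{3}$. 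The paper's upper-bound argument hinges on this coupling: it splits on the number and kind of empty nodes (open-ended vs.\ closed), and only in the single-closed-node subcase does the ratio reach $k+\tfrac12$, precisely because the smaller numerator $\opt_1$ is used. In every other subcase (no closed nodes, a closed and an open-ended node, $\ge 2$ closed nodes) the equilibrium welfare is strictly larger and the ratio stays at most $k$. Your ``careful finite-$n$ accounting'' cannot rescue this without that case split, because the slack you need is in $\opt$, not in $\SW(\vv)$.

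A secondary issue is the lower-bound gadget for $k\ge 3$. Your description (``two agents of every type in each block'', victims at utility $0$, repeated blocks with mixed $1$--$2$ gaps) is internally inconsistent: if each victim type has two adjacent agents inside a block, they get utility $1/3$, not $0$, and the resulting ratio is bounded by roughly $\tfrac{4k}{2k+1}<2$. The paper's tight example is much smaller: three agents per type on a line with $3k+1$ nodes and a single closed empty node between two enemy agents who themselves have utility $0$; only four agents carry utility $1/3$, giving welfare $4/3$ against $\opt_1=\tfrac{4k+2}{3}$. For $k=2$ your $[1122]$ idea also needs the extra observation that the two line endpoints are forced to utility $1/2$ to reach $\SW(\vv)\ge\tfrac{n-1}{3}$ rather than $\tfrac{n-2}{3}$; without it the ratio overshoots $2$.
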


The proof of the theorem will follow by the next three lemmas, which show upper and lower bounds for $k=2$ and $k\geq 3$.

\begin{lemma}\label{lem:poa-2-line}
The price of anarchy of balanced modified $2$-Schelling games on a line is at most~$2$.
\end{lemma}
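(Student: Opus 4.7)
The plan is to match an upper bound on $\opt$ with a lower bound on the social welfare at any equilibrium $\vv$, and show the ratio is at most $2$.

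First, I would bound $\opt$. On a line every agent has at most two neighbors, so her utility is at most $2/3$. In any assignment the agents form a collection of maximal occupied blocks; in each block of size $s\ge 2$ the two boundary agents contribute utility at most $1/2$ and the $s-2$ interior agents contribute at most $2/3$, while singletons contribute $0$. Since the game has two types, even the welfare-maximizing placement must include at least two type-monochromatic blocks (or equivalently accept a bichromatic boundary), and a straightforward calculation over this structure yields $\opt \le \tfrac{2n-2}{3}$.

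Next, I would lower-bound $SW(\vv)$ for any equilibrium $\vv$. Because $|V|>n$ and the topology is a connected line, at least one empty node $v$ exists, and we may pick the one maximizing $x(v)\in\{0,1,2\}$. Since the occupied nodes are nonempty and the graph is connected, some empty node must be adjacent to at least one agent, so $x(v)\ge 1$. The plan is to apply the same equilibrium reasoning used to derive~(\ref{eq:balanced-eq-x}) in the proof of Theorem~\ref{thm:poa-k-balanced}, specialized to $k=2$ with $n_1=n_2=n/2$: no agent of either type can gain by jumping to $v$, which gives a utility floor of $x_\ell/(x+1)$ for non-adjacent agents of type $\ell$ and $(x_\ell-1)/x$ for adjacent ones. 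Summing over the two types yields $SW(\vv)\ge \frac{x}{x+1}\cdot \frac{n-2}{2}$.

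The main case is $x(v)=2$: then $SW(\vv)\ge (n-2)/3$, and combined with the bound on $\opt$ this already delivers a ratio tending to $2$. The delicate case is when every empty node has $x\le 1$, meaning that no empty node is sandwiched between two occupied nodes; equivalently, every gap between occupied blocks has length $\ge 2$ (or extends to a line-endpoint). Here I would exploit the line structure directly: since the gaps are wide, a boundary agent $i$ of any block could jump into the interior of an adjacent gap and obtain utility $0$; for this deviation to be unprofitable she must already have utility $>0$, i.e., her unique in-block neighbor is a friend. Consequently every boundary agent has utility at least $1/2$, and an interior agent either has utility $\ge 1/3$ or would benefit from jumping into a wide gap adjacent to a friend elsewhere in the line. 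Carefully summing these per-agent contributions gives the same floor $SW(\vv)\ge (n-1)/3$.

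Combining the two bounds gives $\opt/SW(\vv) \le \frac{(2n-2)/3}{(n-1)/3} = 2$. The main obstacle is the refined case analysis when no empty node has $x=2$, since the equilibrium inequality~(\ref{eq:balanced-eq-x}) alone is too weak there and we must genuinely leverage the one-dimensional block-and-gap structure of the line.
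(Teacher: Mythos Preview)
Your overall plan mirrors the paper's structure (bound $\opt$ above by $\tfrac{2(n-1)}{3}$, bound $\SW(\vv)$ below by a case analysis on an empty node $v$), but there are two genuine gaps.

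\medskip
\noindent\textbf{The $x(v)=2$ case is not tight enough.} Plugging $x=2$, $k=2$ into~\eqref{eq:balanced-eq-x} gives $\SW(\vv)\ge (n-2)/3$, and hence only
\[
\frac{\opt}{\SW(\vv)} \le \frac{(2n-2)/3}{(n-2)/3} \;=\; 2+\frac{2}{n-2}\;>\;2 .
\]
You notice this yourself (``tending to $2$''), but in your final line you silently switch to the floor $(n-1)/3$, which you only claimed for the $x\le 1$ case. The combination must use the weaker of the two case bounds, so as written the proof does not deliver the stated constant $2$. The paper recovers the missing $1/3$ by observing, when $v$ lies between two agents of \emph{different} types, that the two agents sitting at the far ends of the two occupied paths emanating from $v$ must each have a friend as neighbor (otherwise they would jump to $v$), so they contribute $1/2$ rather than just $1/3$; and it separately rules out the subcase where both neighbors of $v$ have the \emph{same} type. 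Your appeal to~\eqref{eq:balanced-eq-x} discards exactly this endpoint information.

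\medskip
\noindent\textbf{The $x(v)\le 1$ argument has a logical slip.} You argue that a boundary agent could jump into the interior of a wide gap and get utility $0$, and conclude ``for this deviation to be unprofitable she must already have utility $>0$''. But equilibrium only requires current utility $\ge 0$, which is vacuous. The correct deviation to consider is to an empty node with $x=1$ adjacent to a friend; this is what forces utility $\ge 1/2$. The paper handles this case by showing that once some open-ended empty node is adjacent to a red agent, every other red agent needs utility $\ge 1/2$ and hence has only red neighbors, and then bootstraps to all agents having utility $\ge 1/2$ (giving $\SW(\vv)\ge n/2$, stronger than your claimed $(n-1)/3$).
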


\begin{proof}
Consider an arbitrary balanced modified $2$-Schelling game on a line. Let there be $n$ agents, with half of them \emph{red} and half of them \emph{blue}. Since the topology is a line, in the optimal assignment the agents of same type are assigned right next to each other and the two types are well-separated by an empty node (which exists). Consequently, for each type, there are two agents with utility $1/2$ and $n/2-2$ agents with utility $2/3$,  and thus
\begin{align}\label{eq:opt-line-k2}
\opt = 2 \cdot \bigg( 2 \cdot \frac{1}{2} + \left( \frac{n}{2}-2\right) \frac{2}{3} \bigg) = \frac{2(n-1)}{3}.
\end{align}

Now, let $\vv$ be an equilibrium assignment, and consider an empty node $v$ which, without loss of generality that,  is adjacent to a red agent $i$. We distinguish between three cases:

\medskip

\noindent
\underline{$v$ is adjacent to another red agent $j$.}
Then, $\vv$ cannot be an equilibrium. If $i$ and $j$ are the only red agents, they get utility $0$ and want to jump to $v$ to get $1/2$. Otherwise, there exists a third red agent that gets utility at most $1/2$ (by occupying at best the end of a red path) who wants to jump to $v$ to get $2/3$.

\medskip

\noindent
\underline{$v$ is also adjacent to a blue agent $j$.}
Since $v$ is connected to a red and a blue agent, every agent must have utility at least $1/3$ in order to not want to jump to $v$. However, observe that $v$ defines two paths that extend towards its left and its right. The two agents occupying the nodes at the end of these paths must be connected to friends and have utility $1/2$; otherwise they would have utility $0$ and would prefer to jump to $v$. Therefore, we have two agents with utility exactly $1/2$ and $n-4$ agents with utility at least $1/3$; we do not really know anything about the utility of $i$ and $j$.
Putting all these together, we obtain
\begin{align*}
\SW(\vv) \geq 2 \cdot \frac{1}{2} + (n-4)\frac{1}{3} = \frac{n-1}{3},
\end{align*}
and the price of anarchy is at most $2$.

\medskip

\noindent
\underline{$v$ is a leaf or is adjacent to an empty node.}
Any of the remaining $n/2-1$ red agents must have utility at least $1/2$ in order to not have incentive to jump to $v$. So, all red agents are connected only to red agents, which further means that $i$ is also connected to another red agent (otherwise she would be isolated, have utility $0$ and incentive to jump), and all blue agents are only connected to other blue agents. Therefore, everyone has utility at least $1/2$, yielding price of anarchy at most $4/3$.
\end{proof}

\begin{lemma}\label{thm:poa-3-line}
For every $k \geq 3$, the price of anarchy of balanced modified $k$-Schelling games on a line is at most $k+1/2$.
\end{lemma}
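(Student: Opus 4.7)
The plan is to upper-bound $\opt$ and then lower-bound $\SW(\vv)$ for any equilibrium $\vv$, so that $\PoA = \opt/\SW(\vv) \leq k+1/2$ falls out.

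For $\opt$, on a line the social-welfare-maximising configuration places each of the $k$ types in its own contiguous monochromatic block of length $n/k$, with consecutive blocks separated by at least one empty node. Each such block contributes $2\cdot \frac{1}{2} + (n/k-2)\cdot \frac{2}{3}$, and summing over the $k$ blocks gives $\opt \leq \frac{2n-k}{3}$.

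For the lower bound on $\SW(\vv)$, the key tool is the equilibrium condition: for every empty node $v$, each agent of type $\ell$ not adjacent to $v$ has utility at least $x_\ell(v)/(1+x(v))$, which on a line takes values in $\{0,1/3,1/2,2/3\}$ depending on the two graph-neighbors of $v$. I would split the argument into three structural subcases. (i) If some empty node has two occupied neighbors of the same type $\ell$, the $\geq 2/3$ constraint on the remaining $\ell$-agents, combined with the line-specific fact that any agent adjacent to an empty node has utility at most $1/2$, forces the $\ell$-block to extend until it is entirely monochromatic, producing a highly structured configuration with a very large $\SW(\vv)$. (ii) If every length-one inner gap is a ``strong mixed'' gap (its two occupied neighbors are of distinct types $\ell_1 \neq \ell_2$), a single such empty already yields $\SW(\vv) \geq 2(n-k)/(3k)$, coming from the $n/k-1$ non-adjacent $\ell_1$-agents and the $n/k-1$ non-adjacent $\ell_2$-agents, each contributing at least $1/3$. (iii) If every empty node has at most one occupied neighbor (all gaps are weak), the equilibrium condition on each boundary type forces the run at every block boundary to extend until the block is monochromatic, so the whole assignment is a disjoint union of monochromatic blocks and $\SW(\vv) = (2n-t)/3$ for some $t \geq k$, giving $\PoA \leq 2 \leq k+1/2$.

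Putting these together, subcase (ii) already implies $\PoA \leq k(2n-k)/(2(n-k))$, which is at most $k + 1/2$ whenever $n \geq k(k+1)$. For the remaining regime $n < k(k+1)$ I would refine the mixed-gap argument by tracking the two boundary agents adjacent to the strong mixed gap itself: an agent of type $\ell_1$ adjacent to the gap with utility $0$ can often improve by jumping to another empty node (because her remaining type-$\ell_1$ fellows are, by the hypothesis, not already arranged to deny her such an improvement), so in any equilibrium she must have utility $\tfrac{1}{2}$; this bumps the $\ell_1$- and $\ell_2$-contributions up and is enough to conclude $\SW(\vv) \geq \frac{2(2n-k)}{3(2k+1)}$ uniformly in $n$.

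The main obstacle is precisely this refinement for the small-$n$ regime, together with a careful treatment of ``hidden'' types (those with no agent at any block boundary): a naive accounting lets such types contribute zero to $\SW(\vv)$, giving only $\PoA \leq O(k)$. The tight additive $+1/2$ comes out of tying the number of hidden types to the number of blocks $t$ via the packing bound $|H|\cdot n/k \leq n - 2t$, and then cashing in the extra boundary constraints provided by the additional blocks that this packing bound forces to exist.
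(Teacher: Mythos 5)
Your outline reproduces the easy half of the argument, but the step you yourself flag as ``the main obstacle'' is exactly where the proof lives, and the fix you sketch for it does not work. Your case (ii) bound $\SW(\vv)\geq 2(n-k)/(3k)$ paired with $\opt\leq(2n-k)/3$ gives $k+1/2$ only when $n\geq k(k+1)$, as you note. For the remaining regime you claim that an agent adjacent to a strong mixed gap must have utility $1/2$ at equilibrium, and you deduce the uniform bound $\SW(\vv)\geq\frac{2(2n-k)}{3(2k+1)}$. Both claims are false. The tight lower-bound instance for this lemma (three agents per type on a line with $3k+1$ nodes, the unique empty node sitting between two enemies) is an equilibrium in which both agents adjacent to the mixed gap have utility $0$: there is no other empty node for them to jump to, and jumping into the gap still leaves each with a single enemy neighbour. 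Its social welfare is exactly $4/3$, which is strictly below $\frac{2(2n-k)}{3(2k+1)}=\frac{10k}{6k+3}\rightarrow 5/3$. So no strengthening of the equilibrium welfare bound of the kind you propose can close the gap; the instance shows your target inequality is simply not available.

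The missing idea is that in the problematic regime one must instead tighten the bound on $\opt$. The binding case is a single empty node, i.e.\ a line with $n+1$ nodes: then only one type can be fully separated in any assignment, two further types each trade an endpoint utility of $1/2$ for $1/3$, and the remaining $k-3$ types make this trade at both ends, giving $\opt\leq\frac{2n-2k+2}{3}$ instead of $\frac{2n-k}{3}$. Combined with $\SW(\vv)\geq 2\left(\frac{n}{k}-1\right)\cdot\frac{1}{3}=\frac{2(n-k)}{3k}$ (your own case (ii) count) this yields $\PoA\leq k\left(1+\frac{1}{n-k}\right)$, which is at most $k+\frac{1}{2}$ once $n\geq 3k$; the case $n=2k$ is handled separately and gives only $\frac{3}{5}k$. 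When there are two or more empty nodes, or an open-ended one, the crude bound $\opt\leq\frac{2n-k}{3}$ already suffices because more types are then pinned to utility at least $1/3$ (or $1/2$), and one gets $\PoA\leq k$; your ``hidden types'' packing heuristic is not needed. Your case (i) is vacuous at equilibrium (an empty node flanked by two same-type agents always attracts a jump), and case (iii) is fine.
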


\begin{proof}
Consider an arbitrary balanced modified $k$-Schelling game on a line with $k \geq 3$ types.
We will first establish two upper bounds on the social welfare of the optimal assignment for two different cases.
Since the topology is a line, the optimal assignment is such that the agents of same type are assigned right next to each other and the types are well-separated, depending on the number of empty nodes.

No matter how many empty nodes there are, a straightforward upper bound on the optimal social welfare $\opt$ is obtained by assuming that all types can be separated. Then, for each type, there are two agents with utility $1/2$ and $\frac{n}{k}-2$ agents with utility $2/3$. By summing over all types, we obtain
\begin{align}\label{eq:opt-line-general}
\opt \leq k \left( 2\cdot \frac{1}{2} + \left(\frac{n}{k}-2 \right)\frac{2}{3} \right)  = \frac{2n-k}{3}.
\end{align}
We also consider the special case where the game is such that there is only one empty node; that is, the line has $n+1$ nodes. Let $\opt_1$ denote the optimal social welfare for such a game.  Then, only one type can be well-separated, for which there are two agents with utility $1/2$ and $n/k-2$ with utility $2/3$. For two of the other types, there is one agent with utility $1/2$ (the one that is either next to the empty node or positioned at the end of the line), one agent with utility $1/3$ (connecting this type to another one), and $n/k-2$ agents with utility $2/3$. For the remaining $k-3$ types, there are two agents with utility $1/3$ and $n/k-2$ agents with utility $2/3$. Putting everything together, we obtain
\begin{align}\label{eq:opt-line-1}
\opt_1 \leq  \frac{2n-2k+2}{3}.
\end{align}

Now consider an equilibrium assignment $\vv$.
We say that an empty node is {\em open} if it is adjacent only to other empty nodes, {\em open-ended} if adjacent to only one agent, and {\em closed} if it is adjacent to two agents of different type. Observe that the existence of an open empty node implies the existence of an open-ended empty node, but not vice versa. Moreover, empty nodes that are adjacent to two agents of the same type cannot appear as then $\vv$ would not be an equilibrium: If there are two agents per type, then these two agents would want to jump to the empty node to connect to each other. Otherwise, there exists another agent of the same type with utility at most $1/2$ who would prefer to jump and increase her utility to $2/3$. We now distinguish between cases.

\medskip

\noindent
\underline{There are no closed empty nodes.}
Then, there exists an open-ended empty node $v$ that is adjacent to an agent $i$ of some type $\ell$, which means that the remaining $\frac{n}{k}-1$ agents of type $\ell$ must have utility at least $1/2$ in order to not have incentive to jump to $v$. For this to be possible, all these $\frac{n}{k}-1$ agents of type $\ell$ must be connected only to agents of type $\ell$. This further means that agent $i$ must also be connected to other agents of type $\ell$; otherwise there would exist an open-ended empty node $z \neq v$ where $i$ would have incentive to jump. Since all agents of type $\ell$ are connected only to agents of type $\ell$, there must exist another open-ended empty node $v'$ that is adjacent to an agent $i'$ of some type $\ell'$. By repeating the above argument recursively, we can now easily show that all agents are connected only to agents of their own type and thus have utility at least $1/2$. Hence, $\SW(\vv) \geq n/2$. Moreover, from \eqref{eq:opt-line-general} we immediately have that $\opt \leq 2n/3$, which implies that the price of anarchy is at most $4/3$.

\medskip

\noindent
\underline{There is at least one closed and one open-ended empty node.}
Let $\ell$ be the type of the agent who is adjacent to the open-ended empty node. Then, all agents of type $\ell$ must have utility at least $1/2$ so that they do not have incentive to jump to this empty node.
Let $t \neq \ell$ be the type of one of the agents who is adjacent to the closed empty node.
Then, each of the remaining $n/k-1$ agents of type $t$ must have utility at least $1/3$ in order to not have incentive to jump.
Consequently, we have that
\begin{align*}
\SW(\vv) \geq \frac{n}{k}\cdot \frac{1}{2} + \left( \frac{n}{k}-1 \right) \frac{1}{3} = \frac{5n-2k}{6k}.
\end{align*}
By \eqref{eq:opt-line-general}, we have that the price of anarchy is at most
\begin{align*}
\frac{\opt}{\SW(\vv)} \leq \frac{2n-k}{5n-2k} \cdot 2k \leq \frac{4}{5}k,
\end{align*}
where the last inequality follows by the fact that $\frac{2n-k}{5n-2k} \leq \frac{2}{5}$.

\medskip

\noindent
\underline{There are only closed empty nodes.}
We will now distinguish between a few more subcases as follows:
\begin{itemize}
\item \underline{$n=2k$.}
Consider any of the closed empty nodes. Let $i$ and $j$ be the two agents that are adjacent to this empty node. Then, the only friend of $i$ must be connected to $i$ in order to get positive utility and not have incentive to jump to the empty node (in which case she would get utility $1/3$). Similarly, the only friend of $j$ must be connected to $j$. Therefore, we have at least two agents ($i$ and $j$) with utility $1/2$ and two agents ($i$'s friend and $j$'s friend) with utility $1/3$, yielding
\begin{align*}
\SW(\vv) \geq 2\cdot \frac{1}{2} + 2\cdot \frac{1}{3} = \frac{5}{3}.
\end{align*}
In this case, the upper bound on the optimal social welfare from \eqref{eq:opt-line-general} can be simplified to $\opt \leq k$, and thus the price of anarchy is at most $\frac{3}{5}k$. So, in the following cases we assume that $n \geq 3k$.

\medskip

\item \underline{There is a single empty node.}
This node is inbetween two agents of different types, say $\ell$ and $t$. Hence, all the remaining $2\left(\frac{n}{k}-1\right)$ agents of types $\ell$ and $t$ must have utility at least $1/3$ in order to not have incentive to jump to the empty node. Therefore, we have that
\begin{align*}
\SW(\vv) \geq 2\left( \frac{n}{k} - 1 \right)\frac{1}{3} = \frac{2n-2k}{3k}.
\end{align*}
By \eqref{eq:opt-line-1} and since $n \geq 3k$, we now obtain the following bound on the price of anarchy:
\begin{align*}
\frac{\opt_1}{\SW(\vv)} \leq k \frac{2n-2k+2}{2n-2k} = k \left( 1 + \frac{1}{n-k} \right) \leq k +\frac{1}{2}.
\end{align*}

\medskip

\item \underline{There are at least two empty nodes.}
Consider an agent $i$ of type $\ell \in [k]$ who is adjacent to one of the empty nodes. Then, all the remaining $\frac{n}{k}-1$ agents of type $\ell$ must have utility at least $1/3$ in order to not have incentive to jump to the empty node. Thus, if there exists another agent $j \neq i$ of type $\ell$ who is adjacent to a different empty node, then all agents of type $\ell$ have utility at least $1/3$. Let $\Lambda \geq 2$ be the number of different types with at least one agent adjacent to an empty node, and let $\lambda \leq \Lambda$ be the number of these types with at least two agents adjacent to empty nodes.
We have that
\begin{align*}
\SW(\vv) \geq (\Lambda-\lambda) \left( \frac{n}{k}-1\right) \frac{1}{3} + \lambda \frac{n}{k} \frac{1}{3}
= \frac{\Lambda n - (\Lambda - \lambda)k}{3k}.
\end{align*}
By \eqref{eq:opt-line-general}, the price of anarchy is
\begin{align*}
\frac{\opt}{\SW(\vv)} \leq \frac{2n-k}{\Lambda n - (\Lambda - \lambda)k} \cdot k.
\end{align*}
If $\lambda = 0$, then since there are at least two empty nodes, we have that $\Lambda \geq 4$. Combined with the assumption that $n \geq 3k$, we obtain
\begin{align*}
\frac{\opt}{\SW(\vv)} \leq \frac{2n-k}{4n-4k} \cdot k \leq \frac{5}{8} k.
\end{align*}
On the other hand, if $\lambda \geq 1$, then since $\Lambda \geq 2$ and the function $\Lambda n - (\Lambda-1)k$ is non-decreasing in $\Lambda$, we have that
\begin{align*}
\frac{\opt}{\SW(\vv)} \leq \frac{2n-k}{\Lambda n - (\Lambda - 1)k} \cdot k \leq k.
\end{align*}
\end{itemize}
This completes the proof.
\end{proof}

\begin{figure}[t]
\centering
\includegraphics[scale=0.45]{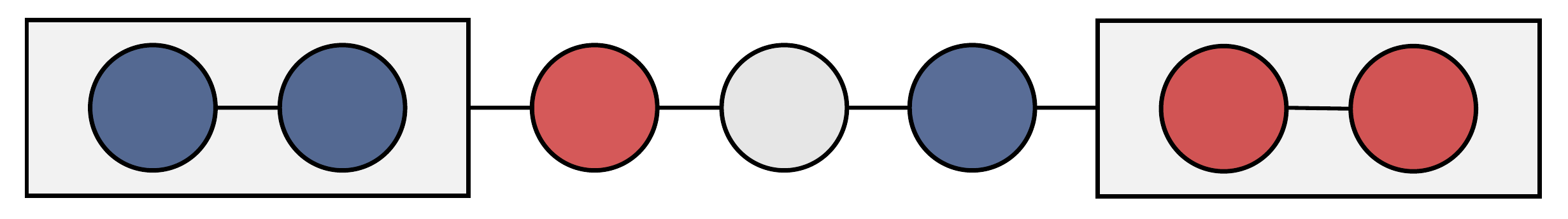}
\caption{The equilibrium assignment used in the proof of the lower bound in Lemma~\ref{lem:poa-line-lower}.
For $k \geq 3$, the squares represent components in which agents of types different than red and blue can be placed so that they have utility $0$.
}
\label{fig:poa-line}
\end{figure}

\begin{lemma}\label{lem:poa-line-lower}
The price of anarchy of balanced modified $k$-Schelling games on a line is at least $2$ for $k=2$ and $k+1/2$ for $k\geq 3$.
\end{lemma}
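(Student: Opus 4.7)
The plan is to separately construct equilibrium assignments for the two cases $k=2$ and $k\geq 3$ on a line with exactly one empty node, and to show that they achieve the claimed price of anarchy bounds. The two constructions must differ because for $k=2$ the agents adjacent to the empty node cannot both have utility $0$ at equilibrium (any ``starving'' same-type agent elsewhere would strictly prefer to jump to the empty slot to meet a same-type friend), whereas for $k\geq 3$ the additional types act as ``buffers'' and this obstruction disappears.

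For $k=2$, I would take $n$ to be a multiple of $4$ and consider a line of $n+1$ nodes. The equilibrium places $a$ blocks of pattern $RRBB$ to the left of the single empty slot and $n/4-a$ to the right, where $1 \leq a \leq n/4-1$:
\[
\underbrace{RRBB\cdots RRBB}_{a \text{ blocks}} \;\; \square \;\; \underbrace{RRBB\cdots RRBB}_{n/4-a \text{ blocks}}.
\]
Since the empty slot is adjacent to a $B$ (on the left) and an $R$ (on the right), a jump to it yields utility $1/3$ for any red or blue deviator, which is no strict improvement for anyone. A direct count gives four agents with utility $1/2$ (the two endpoints and the two flanking the empty slot) and the remaining $n-4$ agents with utility $1/3$, so $\SW(\vv) = (n+2)/3$. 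Combined with $\opt = 2(n-1)/3$ from~\eqref{eq:opt-line-k2}, this yields $\PoA \geq 2(n-1)/(n+2)$, which exceeds $2-\varepsilon$ for $n$ large enough.

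For $k \geq 3$, I would take $n=3k$ and distinguish two types (red and blue), denoting their three agents each by $R_1,R_2,R_3$ and $B_1,B_2,B_3$. The remaining $3(k-2)$ agents form three ``buffer'' blocks $X_1, X_2, X_3$, where each $X_i$ contains exactly one agent of every type in $\{3,\dots,k\}$, listed in any order. On a line of $n+1$ nodes, place
\[
X_1 \;\; R_2 \, R_3 \;\; X_2 \;\; B_2 \, B_3 \;\; X_3 \;\; R_1 \;\; \square \;\; B_1.
\]
A case analysis of the neighborhoods shows that $R_2, R_3, B_2, B_3$ each have one friend and one enemy (utility $1/3$), while $R_1$ and $B_1$ each have only one enemy neighbor because the empty slot occupies the other adjacent node (utility $0$), and every agent inside an $X_i$ has neighbors of types different from its own (utility $0$). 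Hence $\SW(\vv) = 4/3$. The optimum is achieved by the ``well-separated'' assignment $T_1T_1T_1 \; \square \; T_2T_2T_2 \cdots T_kT_kT_k$, giving $\opt = 5/3 + 3/2 + (k-3)\cdot 4/3 + 3/2 = (4k+2)/3$, matching~\eqref{eq:opt-line-1}. Therefore $\PoA \geq (4k+2)/4 = k + 1/2$.

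The main obstacle is verifying the equilibrium property in the $k\geq 3$ construction, and specifically for the two utility-$0$ agents $R_1$ and $B_1$ adjacent to the empty slot. A deviation to the empty slot by a red or blue agent yields utility $1/3$, and by any other-type agent yields $0$; in both cases this is not a strict improvement. The delicate check is for $R_1$ herself: when $R_1$ moves into the empty slot, her previous position becomes empty, and her sole remaining neighbor is $B_1$, still an enemy, so her utility remains $0$; symmetrically for $B_1$. That the buffer blocks avoid same-type adjacencies is automatic since $|X_i| = k-2$ and $X_i$ contains one agent of each non-red, non-blue type, while the junctions with $R_2R_3$ or $B_2B_3$ are between types that never coincide with the red or blue pair.
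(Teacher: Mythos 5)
Your proposal is correct, and the $k\geq 3$ half is essentially the paper's own construction: a single empty node flanked by a red and a blue agent (both with utility $0$), the remaining two red and two blue agents paired up to earn $1/3$ each, and all other types arranged as zero-utility ``buffers'', giving $\SW = 4/3$ against $\opt = (4k+2)/3$. Where you diverge is the $k=2$ case. The paper reuses the same template with three agents per type on a line of $7$ nodes (e.g.\ $R\,R\,B\,\Box\,R\,B\,B$): the two agents flanking the empty node have utility $0$, the other four have utilities $1/2,1/3,1/3,1/2$, so $\SW=5/3$ against $\opt=10/3$, attaining the ratio $2$ exactly on a single finite instance. Your stated reason for abandoning that template for $k=2$ --- that the two flanking agents cannot both have utility $0$ because a same-type agent elsewhere would jump in --- is not actually an obstruction: the other red (resp.\ blue) agents need only have utility at least $1/3$, which equals what the jump would offer, and the paper's instance realizes exactly that. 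Your alternative $RRBB$-block construction is nonetheless a valid equilibrium (every agent has utility at least $1/3$ and the empty node offers at most $1/3$; the two flanking agents would drop to $0$ by jumping), and it yields $\PoA \geq 2(n-1)/(n+2)$, which establishes the lower bound of $2$ only in the supremum as $n\to\infty$ rather than on a fixed instance. Since the price of anarchy of a class is defined as a supremum over games, this suffices for the lemma as stated, so the proof goes through; it is just slightly weaker (asymptotic rather than exact) and misses the simpler unified construction.
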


\begin{proof}
We consider balanced modified $k$-Schelling games with three agents per type on a line with $3k+1$ nodes. The optimal assignment is such that the types are placed next to each other, and the empty node is used to separate one type from the others.
Hence, the optimal social welfare is $10/3$ for $k=2$ and $\frac{4k+2}{3}$ for $k \geq 3$.

Now, consider an equilibrium assignment according to which the empty node is between two agents $i$ and $j$ of different types $\ell$ and $t$. Both agents are adjacent to enemies, and obtain utility $0$; clearly, neither of them has incentive to jump to the empty node. For both types $\ell$ and $t$, the remaining two agents are adjacent to each other. Hence, they obtain utility at least $1/3$ and have no incentive to jump to the empty node. For any other type, the agents are assigned to nodes so that they get utility $0$. Fig.~\ref{fig:poa-line} depicts this general equilibrium, where $\ell$ is red and $t$ is blue.
For $k=2$, there are two agents (between a friend and an enemy) with utility $1/3$ and two agents (occupying the nodes at the ends of the line) with utility $1/2$. Hence, the social welfare at equilibrium is $5/3$ and the price of anarchy at least $2$.
For $k \geq 3$, there are only four agents with utility $1/3$, yielding social welfare $4/3$ and price of anarchy $k+1/2$.
\end{proof}

It should be straightforward to observe that the price of stability of modified $k$-Schelling games on a line is $1$. Indeed, the optimal assignment that allocates agents of the same type next to each other and separates different types with an empty node (if possible) is an equilibrium.

\subsection{Tree Graphs}
As we showed in Section~\ref{sec:one}, for $k=1$, the price of anarchy of games on arbitrary trees is the same as the price of anarchy of games on lines. However, this is no longer true when we consider games with $k\geq 2$ types.

\begin{theorem}\label{thm:poa-k-tree}
The price of anarchy of balanced modified $k$-Schelling games on a tree is exactly $\frac{14}{9}k$ for $k \in \{2,3\}$, and exactly $\frac{2k^2}{k+1}$ for $k\geq 4$.
\end{theorem}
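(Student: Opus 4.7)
The plan is to prove matching upper and lower bounds in three stages.

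First, we establish $\opt \leq \frac{2n-k}{3}$. For any assignment and any type-$\ell$ agent $i$, her utility $\frac{x_\ell(v_i|\vv)}{1+x(v_i|\vv)} \leq \frac{x_\ell(v_i|\vv)}{1+x_\ell(v_i|\vv)}$ since $x(v_i|\vv) \geq x_\ell(v_i|\vv)$; hence the social welfare contribution of type $\ell$ is upper-bounded by the optimal social welfare of a single-type game with $n/k$ agents on the same tree, which by Theorem~\ref{thm:poa-1-tree} is at most $\frac{2(n/k)-1}{3}$. Summing over the $k$ types yields the bound.

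Second, for an arbitrary equilibrium $\vv$, we pick a witness empty node $v$ with $x = x(v|\vv) \geq 1$ adjacent agents and apply the equilibrium inequality from Theorem~\ref{thm:poa-k-balanced} to obtain the baseline $\SW(\vv) \geq \frac{x(n-k)}{k(x+1)}$. This generic bound only yields $\PoA \leq \frac{2k(2n-k)}{3(n-k)}$, which is weaker than the claimed bounds in the relevant regimes, so we must strengthen it using tree-specific structure. The induced subgraph on occupied nodes is a forest, and agents near its boundary (leaves of the forest or agents adjacent to empty nodes of the tree) satisfy additional local equilibrium constraints that force higher utilities than predicted generically. A careful case analysis on the type-distribution around each empty node, and on whether these empty nodes partition the tree into monochromatic subtrees or force cross-type adjacencies, should yield $\SW(\vv) \geq \frac{(k+1)\opt}{2k^2}$ for $k \geq 4$ and $\SW(\vv) \geq \frac{9\opt}{14k}$ for $k \in \{2,3\}$, giving the claimed upper bounds on PoA.

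Third, for the matching lower bounds we design explicit tree instances. For $k \geq 4$, the construction uses a tree admitting both (i) an optimal assignment pairing same-type agents on disjoint edges to give $\opt = k$, and (ii) an equilibrium where a single type-$1$ agent sits at a high-degree vertex with her friend as a nearby leaf (with utilities $\frac{1}{2k}$ and $\frac{1}{2}$, respectively) while all other agents have utility $0$ with no profitable jump, giving $\SW = \frac{k+1}{2k}$ and hence PoA exactly $\frac{2k^2}{k+1}$. For $k \in \{2, 3\}$, a separate small tree construction with carefully interleaved types realizes PoA $\frac{14}{9}k$.

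The main obstacle is the second stage: strengthening the equilibrium social welfare bound beyond the generic balanced-graph argument via tree-specific structural reasoning. The case analysis for $k \in \{2, 3\}$ is particularly delicate because the small number of types gives few structural handles, requiring several distinct equilibrium configurations to be handled via dedicated subcases that track the type-arrangement around each empty node and the adjacencies it forces between same-type agents.
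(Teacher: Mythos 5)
There is a genuine gap, and it sits exactly where you flag ``the main obstacle'': the strengthening of the equilibrium welfare bound. Your diagnosis of where the generic bound fails is also slightly off. Note that $\frac{2k(2n-k)}{3(n-k)} \leq \frac{14}{9}k$ if and only if $n \geq 4k$, so for at least four agents per type the generic balanced bound $\SW(\vv) \geq \frac{n-k}{2k}$ together with $\opt \leq \frac{2n-k}{3}$ \emph{already} gives the claimed $\frac{14}{9}k$ --- no tree-specific reasoning about the forest of occupied nodes or boundary agents is needed there. The only regimes where the generic bound is insufficient are $n=2k$ and $n=3k$, and there the correct fix is not structural analysis of how empty nodes cut the tree, but (i) a much sharper bound on $\opt$ (with two agents per type every agent has utility at most $1/2$, so $\opt \leq k$; with three agents per type the line-reduction gives $\opt \leq \frac{5k}{3}$), and (ii) a refined equilibrium bound exploiting the small type sizes: e.g.\ for $n=2k$, any type $\ell$ with $x_\ell(v)=1$ at the witness empty node $v$ must have its two agents adjacent to each other (else the one not adjacent to $v$ jumps), contributing $\frac{x_\ell}{2k}+\frac{x_\ell}{x+1}$ and hence $\SW(\vv)\geq \frac{1}{2}+\frac{1}{2k}$, which yields exactly $\frac{2k^2}{k+1}$. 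The analogous $n=3k$ analysis gives $\frac{5k^2}{3k+1}$. The $\{2,3\}$-versus-$\{k\geq 4\}$ dichotomy in the statement then falls out of comparing the three bounds $\frac{2k^2}{k+1}$, $\frac{5k^2}{3k+1}$, $\frac{14}{9}k$, which cross at $k=4$. Your proposal asserts the two target inequalities $\SW(\vv)\geq \frac{(k+1)\opt}{2k^2}$ and $\SW(\vv)\geq \frac{9\opt}{14k}$ without a mechanism, and the mechanism you gesture at (monochromatic subtrees, cross-type adjacencies forced by empty nodes) is not the one that closes the argument; in fact the two hard cases barely use the tree at all beyond the $\opt$ bound.

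Two smaller points. Your lower bound for $k\geq 4$ is essentially the right construction (star center occupied by a lonely type-$1$ agent with her friend at a leaf, $\SW=\frac{1}{2}+\frac{1}{2k}$ against $\opt=k$ realized on pendant paths), and matches the paper's. But for $k\in\{2,3\}$ you give no construction at all; the one that works has four agents per type, with the three spare type-$1$ agents forming a small pendant subtree of welfare $3/2$ while everyone else gets utility $0$ at the leaves of the star, against $\opt = k\left(2\cdot\frac{1}{2}+2\cdot\frac{2}{3}\right)=\frac{7}{3}k$, giving exactly $\frac{14}{9}k$. Your first stage ($\opt\leq\frac{2n-k}{3}$ via reduction to the one-type line bound per type) is fine.
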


To prove the theorem we will exploit the following lemmas, which show upper bounds on the price of anarchy by distinguishing between cases, depending on the number of agents per type. In particular, we show that the worst case occurs when there are four agents per type for $k \in \{2,3\}$, and when there are two agents per type for $k \geq 4$. One important observation, which we will exploit is that the optimal social welfare is upper-bounded by the optimal social welfare in the case where the topology is a line; we can show this by replicating the arguments used in the proof of Theorem~\ref{thm:poa-1-tree} for each type independently.

\begin{lemma}\label{lem:poa-k-tree-2agents}
The  price of anarchy of balanced modified $k$-Schelling games on a tree with $n=2k$ is at most $\frac{2k^2}{k+1}$ for every $k \geq 2$.
\end{lemma}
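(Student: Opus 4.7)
The plan is to show that the optimum satisfies $\opt \leq k$ and that every equilibrium $\vv$ satisfies $\SW(\vv) \geq (k+1)/(2k)$; the bound $\PoA \leq 2k^2/(k+1)$ will then follow by division. The optimum bound is direct in this regime: since $n_\ell = 2$ for every type, any agent's utility is at most $1/2$ (attained only when her unique friend is her sole agent neighbor), so each type contributes at most $1$ and $\opt \leq k$.

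For the equilibrium bound I will rely on two structural consequences of the equilibrium condition combined with the tree topology. First, for every empty node $v$ and every type $\ell$, $x_\ell(v)\leq 1$: otherwise both agents of type $\ell$ are adjacent to $v$ and, because the topology is a tree, cannot also be adjacent to each other (this would form a cycle); hence both have utility $0$ and either could profitably jump to $v$, contradicting equilibrium. Second, if a type $\ell$ is \emph{unpaired} (its two agents are not adjacent), then neither of its agents has an empty neighbor: otherwise the other agent of type $\ell$, whose current utility is $0$, could jump to that empty neighbor and acquire her friend. Since the tree is connected and contains at least one empty node, at least one agent-empty edge exists, and by the second property the agent incident to that edge must belong to a paired type.

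Let $v$ be such an empty node and set $x := x(v) \geq 1$. By the first property the $x$ agents adjacent to $v$ belong to $x$ distinct types; by the second they are all paired. For each such type $\ell$, let $i_\ell$ be the agent adjacent to $v$ and $j_\ell$ her friend; in a tree $j_\ell$ is not adjacent to $v$ (else $v$-$i_\ell$-$j_\ell$-$v$ would be a cycle). If $j_\ell$ jumped to $v$ she would find her friend $i_\ell$ together with $x-1$ enemies there, obtaining utility exactly $1/(1+x)$, so equilibrium forces $u_{j_\ell} \geq 1/(1+x)$. Trivially $u_{i_\ell} \geq 1/n = 1/(2k)$. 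Summing over the $x$ types gives
\[
\SW(\vv) \;\geq\; x\!\left(\frac{1}{1+x} + \frac{1}{2k}\right),
\]
which is increasing in $x$ and equals $(k+1)/(2k)$ at $x=1$. Dividing yields $\PoA \leq \frac{k}{(k+1)/(2k)} = \frac{2k^2}{k+1}$.

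The main obstacle is isolating the two structural claims precisely; once they are in place the calculation is forced and tight, with the minimum of $x\bigl(\tfrac{1}{1+x}+\tfrac{1}{2k}\bigr)$ occurring at $x=1$, exactly matching the extremal star-like construction where a single empty node is adjacent to a single agent.
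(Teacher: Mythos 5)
Your proof is correct and follows essentially the same route as the paper's: bound $\opt \leq k$ via the two-agents-per-type cap of $1/2$, then show at equilibrium that every type adjacent to an occupied-adjacent empty node $v$ has $x_\ell(v)\leq 1$ and is paired, giving $\SW(\vv) \geq x\left(\frac{1}{x+1}+\frac{1}{2k}\right) \geq \frac{k+1}{2k}$. Your only additions are making explicit the use of the tree structure (no triangles) and the existence of an agent--empty edge via connectivity, both of which the paper leaves implicit.
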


\begin{proof}
Since there are only two agents per type, the maximum utility that any agent can hope to have is $1/2$ by being adjacent to the other agent of her type, and only that agent. Consequently, we have that $\opt \leq k$.

Now, consider an equilibrium assignment $\vv$ and let $v$ be an empty node which is connected to $x_\ell = x_\ell(v)$ nodes of type $\ell \in [k]$, such that $x = x(v) =\sum_{\ell \in [k]} x_\ell \geq 1$.  Clearly, it cannot be the case that $x_\ell =2$ for any $\ell \in [k]$, as in such a case the two agents of type $\ell$ would not be adjacent in $\vv$ and both would have incentive to jump to $v$. So, $x_\ell \leq 1$. Furthermore, for any $\ell$ such that $x_\ell=1$, it must be the case that the agents of type $\ell$ are neighbors, as otherwise both of them would get utility $0$ and the agent not adjacent to $v$ would have incentive to jump to $v$. So, for every $\ell$ such that $x_\ell=1$, the agent adjacent to $v$ has utility at least $\frac{x_\ell}{2k}$ and the agent not adjacent to $v$ has utility at least $\frac{x_\ell}{x+1}$. Putting everything together, we have that the social welfare of the equilibrium $\vv$ is
\begin{align*}
\SW(\vv) \geq \sum_{\ell \in [k]} \left( \frac{x_\ell}{2k} + \frac{x_\ell}{x+1} \right) = \frac{x}{2k} + \frac{x}{x+1} \geq \frac{1}{2k} + \frac{1}{2},
\end{align*}
where the last inequality follows since $x \geq 1$. Therefore, the price of anarchy is at most $\frac{2k^2}{k+1}$ for every $k \geq 2$.
\end{proof}

\begin{lemma}\label{lem:poa-k-tree-3agents}
The price of anarchy of balanced modified $k$-Schelling games on a tree with $n=3k$ is at most $\frac{5k^2}{3k+1}$ for every $k \geq 2$.
\end{lemma}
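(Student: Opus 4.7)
The plan is to follow the skeleton of Lemma~\ref{lem:poa-k-tree-2agents}, with an extra structural step to squeeze out the tight bound. First, by the tree-to-line reduction of Theorem~\ref{thm:poa-1-tree} applied type-by-type, each of the $k$ types contributes at most $1/2+2/3+1/2=5/3$ in any assignment, yielding $\opt\leq 5k/3$. It therefore suffices to show that $\SW(\vv)\geq (3k+1)/(3k)$ at every equilibrium $\vv$, since together with the optimum bound this gives $\PoA\leq 5k^2/(3k+1)$.

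Let $v$ be any empty node, $x_\ell=x_\ell(v)$, $x=\sum_\ell x_\ell\geq 1$. Reasoning exactly as in Lemma~\ref{lem:poa-k-tree-2agents}, each non-adjacent type-$\ell$ agent must have utility at least $x_\ell/(x+1)$ and each adjacent one at least $(x_\ell-1)/x$. Summing over the three agents of each type and using $\sum_\ell x_\ell^2\geq x$ (which follows from $x_\ell^2\geq x_\ell$ for integers $x_\ell\in\{0,1,2,3\}$) simplifies to
\[
\SW(\vv)\;\geq\;\frac{2x^2+\sum_\ell x_\ell^2-x}{x(x+1)}\;\geq\;\frac{2x}{x+1}.
\]
If some empty node has $x\geq 2$, then $\SW(\vv)\geq 4/3\geq (3k+1)/(3k)$ for every $k\geq 1$, and we are done. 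Otherwise, every empty node has $x\leq 1$, and by connectedness of $T$ some empty node $v^*$ has $x(v^*)=1$, adjacent to a unique agent $i$ of type $\ell$; the two other type-$\ell$ agents $a_1,a_2$ must then have utility $\geq 1/2$ each, contributing at least $1$ to $\SW(\vv)$.

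What remains is to gain the extra $1/(3k)$ in this residual case. Since every positive utility has the form $x_\ell/(1+\deg)\geq 1/(3k)$ (as $\deg\leq n-1=3k-1$), it suffices to exhibit one agent outside $\{a_1,a_2\}$ with positive utility. Suppose, for contradiction, that every other agent has utility $0$. Then $i$ is adjacent to no type-$\ell$ friend, so $a_1$'s only possible friend is $a_2$ and vice versa; consequently $\{a_1,a_2\}$ forms an isolated edge in the occupied subgraph, each of degree exactly one there. Now invoke the equilibrium condition at the remaining empty nodes. For any $t\neq\ell$, no empty node may be adjacent to a type-$t$ agent: otherwise either of the other two type-$t$ agents (utility $0$) would jump there for utility $1/2$. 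Likewise, no empty node may be adjacent to $a_1$ or $a_2$: otherwise $i$ (utility $0$) would jump there for utility $1/2$. Thus $\{a_1,a_2\}$ has no edge in $T$ leaving the pair—neither to another agent nor to an empty node—contradicting the connectedness of $T$, which has strictly more than two nodes. The main obstacle is precisely this last contradiction, which leverages equilibrium pressure at multiple empty nodes simultaneously to close the small but essential gap between the trivial bound $\SW(\vv)\geq 1$ and the needed $(3k+1)/(3k)$.
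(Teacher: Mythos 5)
Your proof is correct and follows essentially the same strategy as the paper's: bound $\opt\le 5k/3$ via the tree-to-line reduction, dispose of the case where some empty node sees $x\ge 2$ agents with the $\frac{2x}{x+1}\ge 4/3$ computation, and in the $x=1$ case extract $1$ from the two friends with utility $\ge 1/2$ plus an extra $1/(3k)$ from a third agent forced to have positive utility by connectedness of the tree. The only (immaterial) difference is that the paper pins down the agent $i$ adjacent to the empty node as the one with positive utility, whereas you argue by contradiction that \emph{some} agent outside $\{a_1,a_2\}$ must have positive utility; both hinge on the same observation that an isolated friendly pair must border an empty node to which $i$ would jump.
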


\begin{proof}
By replicating the arguments used in the proof of Theorem~\ref{thm:poa-1-tree} for each type independently, we can show that the optimal social welfare is upper-bounded by the optimal social welfare on a line. So, by \eqref{eq:opt-line-k2} and \eqref{eq:opt-line-general}, for every $k \geq 2$, we have
$$\opt \leq \frac{2n-k}{3} = \frac{5k}{3}.$$

Now, consider an equilibrium assignment $\vv$ and, as in Lemma~\ref{lem:poa-k-tree-2agents}, let $v$ be an empty node which is connected to $x_\ell = x_\ell(v)$ nodes of type $\ell \in [k]$, such that $x = x(v) =\sum_{\ell \in [k]} x_\ell \geq 1$. We distinguish between the following two cases.

\medskip

\noindent
\underline{$x=1$.} Let $\ell$ be the type of the single agent $i$ who is adjacent to $v$. The other two agents $j_1$ and $j_2$ of type $\ell$ must already have utility at least $1/2$ in order to not have incentive to jump to the empty node. Observe that in order for both $j_1$ and $j_2$ to have utility at least $1/2$, it must be the case that at least one of them is also adjacent to $i$, who thus has utility at least $1/(3k)$. Assume otherwise that neither of them is adjacent to $i$. Then, they have to be  connected to each other and to no other agent, which means that at least one of them is adjacent to an empty node where $i$ would have incentive to jump and increase her utility from $0$ to positive. Hence, the social welfare at equilibrium is
\begin{align*}
\SW(\vv) \geq 2\cdot \frac{1}{2} + \frac{1}{3k} = \frac{3k+1}{3k}.
\end{align*}
Therefore, the price of anarchy is at most $\frac{5k^2}{3k+1}$.

\medskip

\noindent
\underline{$x \geq 2$.}
Since the game is balanced, \eqref{eq:balanced-eq-x} is true. Combined together with the assumption of the lemma that $n=3k$, we immediately obtain
\begin{align*}
\SW(\vv) \geq \frac{x}{x+1}\cdot \frac{n-k}{k} \geq \frac{4}{3}.
\end{align*}
Therefore, the price of anarchy is at most $\frac{5k}{4} \leq \frac{5k^2}{3k+1}$ for $k \geq 2$.
\end{proof}

\begin{lemma}\label{lem:poa-k-tree-4agents}
The price of anarchy of balanced modified $k$-Schelling games on a tree with $n \geq 4k$ is at most $\frac{14}{9}k$ for every $k \geq 2$.
\end{lemma}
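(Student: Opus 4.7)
The plan is to combine an upper bound on $\opt$, obtained by applying Theorem~\ref{thm:poa-1-tree} type by type, with a lower bound on $\SW(\vv)$ at an arbitrary equilibrium $\vv$, derived via a case analysis on a well-chosen empty node.

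For the upper bound on $\opt$, I would observe that since $x(v_i)\geq x_\ell(v_i)$, each agent $i$ of type $\ell$ contributes at most $\frac{x_\ell(v_i)}{1+x_\ell(v_i)}$ to $\SW(\vv^*)$. The quantity $\sum_{i\in T_\ell}\frac{x_\ell(v_i)}{1+x_\ell(v_i)}$ is exactly the social welfare achieved in the induced $1$-type game that places only the $n/k$ agents of type $\ell$ on the same tree at the same positions. By Theorem~\ref{thm:poa-1-tree}, this is at most $\frac{2(n/k)-1}{3}$. Summing over the $k$ types yields the familiar line bound $\opt \leq (2n-k)/3$, matching~\eqref{eq:opt-line-general}.

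For the lower bound on $\SW(\vv)$, I would first use connectivity together with $|V|>n$ to select an empty node $v$ adjacent to at least one agent, and then split on $x:=x(v)$. If $x=1$ and the unique agent adjacent to $v$ is some $i\in T_\ell$, then any other agent $j\in T_\ell$ who jumps to $v$ ends up with $i$ as her only agent-neighbor (since $v_j$ itself is not adjacent to $v$, as otherwise we would have $x(v)\geq 2$), and hence with utility exactly $1/2$. The equilibrium condition therefore forces $u_j(\vv)\geq 1/2$ for each of the $n/k-1$ remaining agents of type $\ell$, giving $\SW(\vv)\geq (n-k)/(2k)$. If instead $x\geq 2$, I would reuse the chain of inequalities leading to~\eqref{eq:balanced-eq-x} from the proof of Theorem~\ref{thm:poa-k-balanced} to get $\SW(\vv)\geq \frac{x}{x+1}\cdot \frac{n-k}{k}\geq \frac{2(n-k)}{3k}$.

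What remains is a routine arithmetic check that both lower bounds give $\opt/\SW(\vv)\leq 14k/9$ when $n\geq 4k$. The first case yields $\frac{2k(2n-k)}{3(n-k)}$, a function decreasing in $n$ that equals exactly $14k/9$ at $n=4k$; the second yields $\frac{k(2n-k)}{2(n-k)}$, also decreasing in $n$ and bounded above by $7k/6<14k/9$. The main obstacle is therefore the $x=1$ subcase, which is tight at $n=4k$ and pins down the precise threshold in the statement; the delicate point there is to argue carefully that the jumper's post-jump neighborhood is really $\{i\}$, so that the forced inequality $u_j(\vv)\geq 1/2$ applies to each of the $n/k-1$ agents of type $\ell$ and not merely to some subset of them.
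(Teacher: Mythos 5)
Your proof is correct and follows essentially the same route as the paper: the same per-type reduction to the line bound $\opt \le \frac{2n-k}{3}$, the same equilibrium lower bound $\SW(\vv)\ge \frac{n-k}{2k}$ whose worst case is $x=1$, and the same monotonicity-in-$n$ argument evaluated at $n=4k$. Your extra split into $x=1$ versus $x\ge 2$ is sound but redundant, since the paper's bound \eqref{eq:balanced-eq} (derived from \eqref{eq:balanced-eq-x} with $x\ge 1$) already covers both cases, with the binding one being exactly your $x=1$ subcase.
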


\begin{proof}
We again have that the optimal social welfare is upper-bounded by the optimal social welfare on a line. So, by \eqref{eq:opt-line-k2} and \eqref{eq:opt-line-general},
$$
\opt \leq \frac{2n-k}{3}.
$$
Since the game is balanced, \eqref{eq:balanced-eq} is true, and thus
\begin{align*}
\SW(\vv) \geq \frac{n-k}{2k}.
\end{align*}
Therefore, the price of anarchy is
\begin{align*}
\frac{\opt}{\SW(\vv)} \leq \frac{2n-k}{3n-3k} \cdot 2k.
\end{align*}
Now, observe that the expression $\frac{2n-k}{3n-3k}$ is non-increasing in $n \geq 4k$. Therefore, the price of anarchy is at most $\frac{14}{9}k$.
\end{proof}

We are now ready to prove Theorem~\ref{thm:poa-k-tree}.

\begin{proof}[Proof of Theorem~\ref{thm:poa-k-tree}]
By Lemmas~\ref{lem:poa-k-tree-2agents}, \ref{lem:poa-k-tree-3agents} and \ref{lem:poa-k-tree-4agents} we have three different upper bounds on the price of anarchy for three different cases (games with two agents per type, three agents per type, and at least four agents per type): $\frac{2k^2}{k+1}$, $\frac{5k^2}{3k+1}$, and $\frac{14}{9}k$. Now observe that:
\begin{itemize}
\item For $k \in \{2,3\}$, $\frac{14}{9}k \geq \frac{5k^2}{3k+1} \geq \frac{2k^2}{k+1}$;
\item For $k=4$, $\frac{2k^2}{k+1} \geq \frac{14}{9}k \geq \frac{5k^2}{3k+1}$;
\item For $k\geq 5$, $\frac{2k^2}{k+1} \geq \frac{5k^2}{3k+1} \geq  \frac{14}{9}k$.
\end{itemize}
Hence, for $k \in \{2,3\}$ the worst case is when there are four agents per type which gives an upper bound of $\frac{14}{9}k$, while for $k \geq 4$ the worst case occurs when there are two agents per type and the upper bound is $\frac{2k^2}{k+1}$.

For the lower bounds, we use the games presented in the proofs of Theorems~\ref{thm:poa-k} and~\ref{thm:poa-k-balanced}, but set $n_\ell = n/k$ for every type $\ell \in [k]$ , and change the $k$ cliques $C_1, ..., C_k$ to paths, so that the topology is a tree.
\begin{itemize}
\item
For $k \in \{2,3\}$ we have $n/k=4$ agents per type and use the topology depicted in Fig.~\ref{fig:poa-balanced}. Then, the optimal social welfare is $k \left( 2\frac{1}{2} + 2\frac{2}{3} \right) = \frac{7}{3}k$, while the social welfare of the equilibrium is $3/2$, leading to price of anarchy at least $\frac{14}{9}k$.

\item
For $k \geq 4$, we have $n/k=2$ agents per type and use the topology depicted in Fig.~\ref{fig:poa}. Then, the optimal social welfare is $k$, while the social welfare of the equilibrium is $\frac{1}{2} + \frac{1}{2k}$, leading to price of anarchy at least $\frac{2k^2}{k+1}$.
\end{itemize}
This completes the proof.
\end{proof}

\section{Conclusion and Possible Extensions} \label{sec:future}
We introduced the class of modified Schelling games and studied questions about the existence and efficiency of equilibria. Although we made significant progress in these two fronts, our work leaves many interesting open problems.

In terms of our results, the most interesting and challenging open question is whether equilibria always exist for $k \geq 2$. We remark that to show such a positive result one would have to resort to techniques different than defining a potential function; in Appendix~\ref{sec:potential}, we present an explicit example showing that there is no potential function, even when there are only two types of agents and the topology is a tree. Not being able to argue about the convergence to an equilibrium for $k\geq 2$ further serves as a bottleneck towards proving upper bounds on the price of stability, which we strongly believe that is one of the most challenging questions in Schelling games (not only modified ones). Furthermore, one could also consider bounding the price of anarchy for more special cases such as games on regular or bipartite graphs.

Going beyond our setting, there are many interesting extensions of modified Schelling games that one could consider. For example, when $k \geq 3$, following the work of \citet{echzell2019dynamics}, we could define the utility function of agent $i$ such that the denominator of the ratio only counts the friends of $i$, the agents of the type with maximum cardinality among all types with agents in $i$'s neighborhood, and herself. Alternatively, following the work of \citet{jump}, one could focus on social modified Schelling games in which the friendships among the agents are given by a social network.

\bibliographystyle{named}
\bibliography{references}

\appendix

\section{Appendix: No potential function for $k \geq 2$}\label{sec:potential}
Here we present a simple modified $k$-Schelling game that does not admit any potential function for every $k \geq 2$.
This shows that in order to argue about the existence of equilibria, one would need to resort to different, more advanced techniques.

\begin{lemma}
There exist modified $k$-Schelling games that do not admit a potential function, even when $k=2$ and when the topology is a tree.
\end{lemma}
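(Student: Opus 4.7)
The plan is to exhibit an explicit \emph{improvement cycle}, that is, a sequence of assignments $\vv^{(0)}, \vv^{(1)}, \ldots, \vv^{(m)}$ with $\vv^{(m)} = \vv^{(0)}$ in which each step $\vv^{(t)} \to \vv^{(t+1)}$ corresponds to a single agent jumping to a currently empty node and strictly improving her own utility. If an ordinal potential $\Phi$ existed, each such step would force $\Phi(\vv^{(t+1)}) > \Phi(\vv^{(t)})$, contradicting $\Phi(\vv^{(m)}) = \Phi(\vv^{(0)})$. Hence the whole task reduces to producing one closed improving cycle on a tree with two types of agents.

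To construct the cycle, I would design a small caterpillar-like tree and place a few agents of two types (say red and blue) together with a handful of empty nodes. The leverage points are the distinct values the modified utility can take for agents of small degree, namely $0$, $\tfrac{1}{3}$, $\tfrac{1}{2}$, $\tfrac{2}{3}$, which give enough granularity to flip an agent's preference by adding or removing a single neighbor. Concretely, I would look for three or four agents and two or three empty nodes arranged so that at step $1$ an empty node $v_1$ lies next to a lone red agent who strictly benefits by joining a pair of other reds; this vacates a node $u_1$ that now sits adjacent to an isolated blue agent, enabling a second strictly improving blue jump; and the remaining one or two moves restore the configuration by reversing the intermediate changes.

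The real difficulty will be making the cycle \emph{close exactly}: each agent must return to her starting position, not merely to some permutation of the starting positions. My approach is to search among small trees with roughly $6$--$8$ nodes, three or four agents, and two or three empty nodes, and run the best-response dynamics by hand from a carefully chosen starting assignment. The freedom provided by the ``$+1$'' term in the modified utility, which, for instance, distinguishes utility $\tfrac{1}{2}$ from $\tfrac{2}{3}$ for an agent whose neighbors are all friends, is exactly what makes such cycles possible in our setting whereas they are ruled out in the classical Schelling utility where any monochromatic neighborhood yields utility $1$.

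Finally, to extend the construction from $k=2$ to any $k \geq 2$, I would augment the tree by attaching, for each additional type $t \geq 3$, a disjoint pendant subtree containing the agents of type $t$ in a locally optimal configuration (for example, two agents of type $t$ connected to each other by an edge, with no empty node adjacent that could improve their utility above $\tfrac{1}{2}$). Since these extra components do not interact with the two-type gadget through any beneficial jump, no agent of type $t \geq 3$ ever participates in an improving move, and the improvement cycle from the $k=2$ construction survives unchanged, establishing the lemma for all $k \geq 2$.
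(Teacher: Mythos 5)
Your high-level strategy is exactly the paper's: exhibit a closed sequence of strictly improving jumps, observe that an ordinal potential would have to strictly increase along each step and hence cannot return to its starting value, and handle $k>2$ by attaching pendant components for the extra types that never participate in a beneficial move. The paper also resolves your worry about the cycle ``closing exactly'' in the simplest possible way: its cycle of three jumps merely swaps two agents of the same type between two nodes, so running it twice returns every agent to her original node and yields a genuine closed cycle.

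The gap is that you never actually produce the construction, and the construction is the entire content of the lemma. ``I would search among small trees with roughly $6$--$8$ nodes and run best-response dynamics by hand'' is a research plan, not a proof, and your guiding heuristic --- that the utility values $0,\tfrac{1}{3},\tfrac{1}{2},\tfrac{2}{3}$ attained at small degrees give enough granularity --- is likely insufficient. The paper's example does not live in that regime: it uses three nodes $\alpha,\beta,\gamma$ (with $\alpha$ adjacent to $\beta$) whose fixed neighborhoods contain $1$ red and $1$ blue, $34$ red and $65$ blue, and $49$ red and $50$ blue agents respectively, so that two mobile red agents see the utility chain $\tfrac{1}{3}<\tfrac{34}{100}$, $\tfrac{49}{100}<\tfrac{1}{2}$, $\tfrac{35}{101}<\tfrac{49}{100}$ and cycle forever. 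Those large ``anchor'' populations are there precisely to manufacture utilities strictly between $\tfrac{1}{3}$ and $\tfrac{1}{2}$ and to make one agent's presence at $\alpha$ perturb the other's utility at $\beta$ by just enough ($\tfrac{34}{100}$ versus $\tfrac{35}{101}$) to reverse a preference. Until you exhibit a concrete tree, agent placement, and verified chain of strict inequalities --- whether with small degrees or with anchors of this kind --- the lemma is not proved.
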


\begin{proof}
We focus on $k=2$; extending the lemma to $k\geq 2$ is straightforward.
It suffices to present only a part of the topology and an assignment of some agents which leads to a cycle in the dynamics.
Let $\alpha$, $\beta$, and $\gamma$ be nodes of the topology such that $\alpha$ and $\beta$ are connected.
Moreover, $\alpha$ is connected to  $1$ red and $1$ blue agent, $\beta$ is connected to $34$ red and $65$ blue agents, and $\gamma$ is connected to $49$ red and $50$ blue agents.

Now, suppose there are two red agents $i$ and $j$.
We will show that no matter which two nodes among $\{\alpha,\beta,\gamma\}$ these agents occupy, one of them will always have incentive to jump to the node that remains empty, thus leading to a cycle.
We distinguish between the following three cases:
\begin{itemize}
\item \underline{$i$ occupies $\alpha$ and $\beta$ is empty.} Then, $i$ has utility $1/3$ and would prefer to jump to $\beta$ to obtain utility $34/100$.

\medskip

\item \underline{$i$ occupies $\beta$ and $j$ occupies $\gamma$.} Then, $i$ has utility $34/100$ and $j$ has utility $49/100$.
However, $j$ would prefer to jump to $\alpha$ to gain utility $1/2$.

\medskip

\item \underline{$i$ occupies $\beta$ and $j$ occupies $\alpha$.} Then, $i$ has utility $35/101$ and $j$ has utility $1/2$.
However, $i$ would prefer to jump to $\gamma$ to obtain utility $49/100$.
\end{itemize}
These three cases create a cycle in the dynamics: if $i$ starts from $\alpha$ and $j$ starts from $\gamma$, then (1) $i$ jumps to $\beta$, (2) $j$ jumps to $\alpha$, (3) $i$ jumps to $\gamma$. Hence, $i$ and $j$ have swapped positions, and will forever continue to swap positions (if we only focus on these agents and this particular set of nodes).

Finally, observe that the part of the topology we defined can be extended so that the topology is a tree, and we can add red and blue agents so that the game is balanced.
\end{proof}

\end{document}